\newtheorem{thm}{Theorem}[section]
\newtheorem{lem}[thm]{Lemma}
\theoremstyle{definition}
\newtheorem{defn}[thm]{Definition}
\newtheorem{remark}[thm]{Remark}
\newtheorem*{ass}{Assumption}
\newtheorem*{notation}{Notation}
\def\Z{\mathbb Z}
\def\R{\mathbb R}
\def\N{\mathbb N}
\def\G{\mathbb G}
\def\V{\mathbb V}
\def\A{\mathbb A}
\def\cA{\mathcal A}
\def\zu{\{0,1\}}
\newcommand\blfootnote[1]{%
  \begingroup
  \renewcommand\thefootnote{}\footnote{#1}%
  \addtocounter{footnote}{-1}%
  \endgroup
}
\title[A survey on Functional Encryption]{A survey on Functional Encryption}
\author{Carla Mascia}
\author{Massimiliano Sala}
\author{Irene Villa}
\address{University of Trento}
\email{$\{$carla.mascia, massimiliano.sala, irene.villa$\}$@unitn.it}
\begin{document}
\maketitle 
\blfootnote{ Carla Mascia, Massimiliano Sala, Irene Villa. \emph{A survey on Functional Encryption}. Advances in Mathematics of Communications, doi: 10.3934/amc.2021049 (2021), \url{https://www.aimsciences.org/article/doi/10.3934/amc.2021049}}

\begin{abstract}
Functional Encryption (FE) expands traditional public-key encryption in two different ways: it supports fine-grained access control and allows learning a function of the encrypted data.
In this paper, we review all FE classes, describing their functionalities and main characteristics.
In particular, we mention several schemes for each class, providing their security assumptions and comparing their properties.
 To our knowledge, this is the first survey that encompasses the entire FE family.

\end{abstract}

\section{Introduction}
Recent emerging applications, such as cloud services, highlight the need for a more complex notion of encryption than public-key encryption.
Standard public-key encryption has an  ``all-or-nothing'' access to encrypted data; that is, either a user can decrypt and read a secret message (plaintext), or the user will not get any further information about it in addition to what they already know from the encrypted message (ciphertext).
Nevertheless, in some situations, we may want the receiver to know only some partial information of the plaintext.
Or we may restrict the decryption of ciphertext to individuals who satisfy specific requirements, whereas traditional public-key encryption provides only coarse-grained access to encrypted data, where a single private key can decrypt all data.

In 1984, Shamir \cite{Sha84} proposes the concept of \textit{Identity-Based cryptography} as a way to simplify certificate management in email systems.
Users' identifier information, such as email addresses, can be used as public keys for encryption, so it is possible to encrypt a message to any user whose email address is known.
The receiver requests a decryption key (related to its identity) to an authority or a third party and then decrypts the received ciphertext.
The first practical realisations are presented in 2001, by Boneh and Franklin \cite{BF01}, and by Cocks \cite{Coc01}. 
In 2005, Sahai and Waters \cite{SW05} introduce the \textit{Fuzzy Identity-Based Encryption}, a generalisation of the system proposed by Shamir.
The following year, Goyal, Pandey, Sahai, and Waters \cite{GPSW06} propose the more general class of \textit{Attribute-Based Encryption}, where the access to encrypted data is given only to users with specific attributes or policies. This is the first realisation of fine-grained access control over a cryptographic system.

Later in 2007, Boneh and Waters \cite{BW07} propose the so-called class of \textit{Hidden Vector Encryption}, allowing comparison and subset queries.
Katz, Sahai, and Waters \cite{KSW08} in 2008 present an \textit{Inner-Product Predicate Encryption} scheme, allowing more complex queries such as evaluation of polynomials.
Finally, in 2010 O'Neill \cite{ON10} and in 2011 Boneh, Sahai and Waters \cite{BSW11} introduce the concept of \textit{Functional Encryption}, shortly FE. This new broad vision of encryption systems comprehends all the systems as mentioned above.
FE  expands the functionalities of traditional public-key encryption in two different ways.
First, as we have seen, it supports access control policies, allowing only users who satisfy a certain procedure to decrypt the secret information.
Secondly, it permits selective computations on the ciphertexts, allowing users to learn a function of the encrypted data. 
In 2015, Abdalla, Bourse, De Caro, and Pointcheval \cite{ABDP15} present the first scheme of FE that outputs a functionality of the encrypted message: \textit{Inner Product Encryption}.

In an FE scheme, there are different keys with different roles. In general, a {\em public key} is generated to allow anyone in the system to encrypt any private message $m$. {\em Functional decryption keys}, also referred to as {\em secret keys}, are generated from a {\em master key}, which is supposed to be secret and kept by an authority. Each decryption key is associated with a specific function $f$.
Decrypting an encryption of a message $m$ with a secret key associated to a function $f$ reveals directly the value $f(m)$. 
There also exists a different scenario, called \textit{private-key setting}, where the key used to encrypt a message is not publicly known.
The security of an FE scheme is based on the following: even though an adversary has access to functional decryption keys for different functionalities, it will not learn anything more than what each individual functional decryption key allows to know.

Due to its expressive access control, FE has many applications in practical situations, such as large organisations, where data is often shared according to some access policy.
FE also comprehends \textit{Searchable Encryption}, one of the active research trends in cryptography.
Searchable encryption, also known as \textit{Public-key Encryption with Keyword Search} \cite{BCOP04}, allows searching a piece of information over encrypted data while keeping the data content and the query secret even from the server holding the encrypted data.
When storing sensitive data on remote servers, this guarantees the secrecy of users' sensitive data while preserving searchability on the server-side.
FE can also be used for mining large datasets.
It is beneficial when data is highly sensitive, for example, medical data, but revealing some partial or aggregated information on this data does not compromise the privacy of the ``data-owner''. This can be exploited, for example, to perform statistical analysis for medical research.

As mentioned, one of the fundamental ideas behind FE is to learn a function of the original message. Another cryptographic notion which achieves a similar result is \textit{Fully Homomorphic Encryption} (FHE) \cite{Gen09}. Intriguingly, FE and FHE seem somehow intertwined, as recognised in \cite{GGH+16,alwen2013relationship}.
A notable difference between FE and FHE is the way the functions are applied. FE grants control over which functions can be applied to the data via a master-key holder, who issues secret keys based on the appropriateness of the function. The user then uses the secret key to obtain a function of the plaintext in the clear. Whereas FHE permits functions to be run by anyone with the so-called \textit{evaluation key}, but the result is still encryption. Therefore, only the owner of the secret key can decrypt the result. \\
Yet, the concepts of FHE and FE do have some overlap, and it has been demonstrated that functional encryption might work as FHE, with some slight adaptation \cite{alwen2013relationship}.

\subsection*{Our contribution}
In this paper, we survey the families of functional encryption schemes.
Our work aims to give the reader a wide overview of FE schemes and provide a satisfying summary of state-of-art. 
In particular, we focus on the functionalities of different FE schemes describing their limitations, their security models, and the mathematical assumptions involved.
We also present proof of some implications among these assumptions.
Given the vast literature, we do not provide detailed descriptions of the existing constructions of FE schemes; instead, we illustrate here the main ideas upon which these schemes are based, and we provide some of the most relevant and recent works.
In addition, we present an overview of ``non-standard'' FE schemes, that is, FE schemes that go beyond the inner product encryption and schemes with enhanced properties.
Indeed, in real scenarios, some characteristics of ``standard'' FE are not desirable, such as having unique authority. Hence schemes with more complex features have been studied and proposed, such as decentralised or multi-authority schemes.

\subsection*{Organization} 

The survey is organized as follows.
 Sections \ref{chap: def FE} and \ref{chap : math_ass} are both devoted to introducing preliminaries. In particular, first, we give the definition of Functional Encryption and Predicate Encryption, the latter being a large class of FE. Then, we provide a systematic presentation of the main mathematical assumptions used to prove the security of FE schemes, underlining the relations between some of them. 
 Section \ref{chap: security} briefly presents the two most used models of security for an FE scheme: indistinguishability-based security and simulation-based security. 
 Section \ref{chap: IBEandABE} gives an overview of the well-known classes of Identity-Based Encryption and Attribute-Based Encryption.
  Section \ref{chap: PE} focuses on the class of Predicate Encryption  and
  Section \ref{chap: IPE} describes the class of Inner Product Encryption.
  In Section \ref{chap: other}, we present many different generalisations of the previous schemes.
  Finally, in Section \ref{chap: concl} we draw our conclusions.

\section{Notations and Definition of FE}\label{chap: def FE}

In this section, we collect all the notations and preliminaries needed for the rest of this work. 

When we talk about algorithms or adversaries, we refer to Turing machines.
The acronym PPT stands for \textit{Probabilistic Polynomial Time}, so a PPT algorithm is a probabilistic Turing machine, equipped with an extra randomness tape, that solves problems in polynomial time.

Let $n \in \N$, $\Z_n$ stands for the ring of integers modulo $n$ and $\Z_n^* = \Z_n \setminus \{0\}$.

\begin{defn}
A function $f:\N\rightarrow\R$ is called \textit{negligible} if for any positive integer $d$ there exists an integer $N_d$ such that $|f(k)|<\frac{1}{k^d}$ for any $k\ge N_d$.  
\end{defn}

To define a functional encryption scheme, we follow the definition given in \cite{BSW11}, which is the first attempt to formalise FE.
\begin{defn}\label{def: FE scheme} Let $K$ be the key space containing a special key $\epsilon$, called the \textit{empty key}, and $X$ be the plaintext space. A \textit{functionality} $F$ defined over $K \times X$ is a function $F:K\times X\rightarrow Y$, where $Y$ is the output space containing a ``fail'' character $\perp$. 
A\textit{ functional encryption scheme} for the functionality $F$  is a tuple of four PPT algorithms (\texttt{Setup}, \texttt{KeyGen}, \texttt{Enc}, \texttt{Dec}) satisfying the following conditions for all $k\in K$ and all $x\in X$:\\

\begin{tabular}{l l}
$\texttt{Setup}(1^\lambda)\rightarrow({\rm pp},{\rm mk})$ & on input the security parameter $\lambda$, outputs a public\\
& key $\rm pp$ and a private key $\rm mk$, called \textit{master key};
\\
$\texttt{KeyGen}(\mathrm{mk},k)\rightarrow {\rm sk}_k$ & on input the master key $\rm mk$ and a key $k\in K$,\\
& outputs a secret key ${\rm sk}_k$;
\\
$\texttt{Enc}({\rm pp},x)\rightarrow c$ & on input the public key $\rm pp$ and a message $x\in X$,\\
& outputs a ciphertext $c$;
\\
$\texttt{Dec}({\rm sk}_k,c)\rightarrow y$ & on input a secret key ${\rm sk}_k$ and a ciphertext $c$,\\
& outputs $y$
\end{tabular}

where we require that $y=F(k,x)$ with probability 1.
\end{defn}
The special key $\epsilon$ captures all the information about the plaintext that intentionally leaks from the ciphertext.  Thus, if a user runs $\texttt{Dec}(\epsilon,c)$ on a ciphertext $c=\texttt{Enc}({\rm pp},x)$, it will obtain all the information about $x$ that intentionally are revealed from $c$.  For example, if \texttt{Enc} outputs a ciphertext of the same length as the plaintext, then the encrypted message intentionally reveals $\texttt{len}(x)$, namely the size of the plaintext.

\begin{remark}
Standard public-key encryption is the simplest example of functional encryption.
With $K=\{1,\epsilon\}$, for $k=1$ we get the decryption of a valid ciphertext, that is,
$$F(k,x)=\left\{\begin{array}{ll}
x &\mbox{if }k=1,\\
\texttt{len}(x) &\mbox{if }k=\epsilon.
\end{array}\right. $$
\end{remark}
\begin{remark}\label{rmk:k->f}
Given a functionality $F$, let $\mathcal{F} = \{f_k \ | \ k \in K \text{ and } f_k(\cdot) = F(k, \cdot )\}$ be a family of functions derived from $F$. An FE scheme can be defined starting from $\mathcal{F}$, instead of $F$, and the resulting scheme is the same as that presented in Definition \ref{def: FE scheme}. In an FE scheme for $\mathcal{F}$, the \texttt{Setup} and \texttt{Enc} algorithms are defined as in Definition \ref{def: FE scheme}, whereas \texttt{KeyGen} takes in input the master key and a function $f \in \mathcal{F}$, it outputs a secret key $\mathrm{sk}_f$. The decryption algorithm, using $\mathrm{sk}_f$ on the encryption of $x$ returns $f(x)$, that is, a function of the plaintext $x$.
\end{remark}
\begin{remark}
For some practical applications, it is relevant to distinguish between \textit{public-key} and \textit{private-key setting}. In the former, anyone can encrypt using the public key generated in the \texttt{Setup} phase; in the latter, only those who own the master key can encrypt. 
Notice that the definition of FE scheme in Definition \ref{def: FE scheme} is given in the public-key setting.
\end{remark}

Regarding the notion of security for an FE scheme, we dedicate a separate section, see Section \ref{chap: security}.

Many of the FE schemes present in the literature belong
to the class of \textit{predicate encryption}, a powerful tool that enables fine-grained access to encrypted information such as searching on encrypted data. Regarding the notion of security for an FE scheme, we dedicate a separate section, see Section \ref{chap: security}.

\begin{defn}
A \textit{predicate} is a Boolean-valued function $f: A \rightarrow \{0,1\}$, where $A$ is an arbitrary set, and the elements $0$ and $1$ of the codomain are interpreted as the logical values \emph{false} and \emph{true}, respectively.
\end{defn}
In many applications, a plaintext $x\in X$ is a pair $({\rm ind},m)\in I\times M$, where $I$ is called \textit{index set} and $M$ is called \textit{payload set}.
A \textit{Predicate Encryption} (PE) scheme is then defined in terms of a polynomial-time predicate $P:K\times I\rightarrow\zu$ and the functionality over $K\times (I\times M)$ is defined as
$$F(k\in K\setminus\{\epsilon\}, ({\rm ind},m) )=\begin{cases}
m & \mbox{if } P(k,{\rm ind})=1,\\
\perp & \mbox{if } P(k,{\rm ind})=0.
\end{cases}$$

FE schemes can be divided into three large classes, depending on the features and functionalities they achieve:\begin{itemize}
\item Predicate encryption systems with public index;
\item Predicate encryption systems with private index;
\item Functional encryption systems beyond predicates.
\end{itemize}

In the public-index class, the plaintext index is easily readable from the ciphertext, formally by evaluating $F$ in the empty key $\epsilon$.  
And among these schemes, one finds  Identity-Based Encryption (IBE) and Attribute-Based Encryption (see Section \ref{chap: IBEandABE}). We refer to the schemes with private index simply with Predicate Encryption (PE), and among them, one finds Anonymous IBE, Hidden Vector Encryption (HVE), and Inner Product Predicate Encryption (see Section \ref{chap: PE}). Finally, the last class includes Inner Product Encryption (see Section \ref{chap: IPE}).

\def\cG{\mathcal G}
\def\cB{\mathcal B}
\def\cF{\mathcal F}

\section{Mathematical background and  assumptions}\label{chap : math_ass}
This section presents some of the main assumptions used to prove the security of FE schemes mentioned in this work.

In general, a security assumption is based on the hardness of solving a specific problem.
With \textit{hardness} we mean that an adversary (which is a PPT algorithm) can solve the problem only with a negligible probability.
We will also use \textit{intractable}, \textit{difficult}, or \textit{challenging} as synonyms for hard.

In this paper, $q$ denotes a prime number, with $q \geq 5$.

\subsection{Bilinear Groups} 
In this subsection, we describe some problems and assumptions related to bilinear groups studied and used in FE schemes. Some of the problems presented in the following are formulated in the symmetric case, the others in the more general asymmetric setting.\\

\begin{defn}\label{def:BilGrp}
A \textit{bilinear group} is a tuple $(q,\G_1,\G_2,\G_T,e)$ such that:
\begin{itemize}
\item  $\G_1,\G_2,\G_T$ are (multiplicative) cyclic groups of prime order $q$ where the group actions can be computed efficiently;
\item $e:\G_1\times\G_2\rightarrow\G_T$ is a \textit{bilinear} map, that is, $e(g_1^a,g_2^b)=e(g_1,g_2)^{ab}$, for all $g_1,g_2$ elements in $\G_1,\G_2$, respectively, and all $a,b$ in $\Z_q$;
\item $e$ is \textit{admissible}, that is \begin{itemize}
\item $e$ is efficiently computable for any pair in $\G_1\times\G_2$;
\item $e$ is non-degenerate: the map does not send all pairs in $\G_1\times\G_2$ to the identity $1_{\G_T}$ in $\G_T$ (i.e., $e(g_1,g_2)\ne1_{\G_T}$, when $g_1 \neq 1_{\G_1}$ and $g_2 \neq 1_{\G_2}$).
\end{itemize}
\end{itemize}
The function $e$ is simply called \textit{bilinear map}, or \textit{pairing}. When $\G_2=\G_1$ we talk about \textit{symmetric bilinear group}.
In this case, we identify the bilinear group with $(q,\G_1,\G_T,e)$.
\end{defn}

We recall  that, since $q$ is a prime, every element in $\G_i$ (excluding $1_{\G_i}$) is a generator of $\G_i$, for $i=1,2,T$.

\begin{remark}
It is also possible to consider the groups $\G_1$ and $\G_2$ as additive cyclic groups. In this case, the bilinearity property corresponds to $e(aP,bQ)=e(P,Q)^{ab}$, for $P\in\G_1$ and $Q\in\G_2$. For the assumptions presented in the following, there is an equivalent one with $\G_1$ and $\G_2$ additive groups. For the sake of simplicity, we describe here only the assumptions in the multiplicative form. 
\end{remark}
Explicit constructions of admissible symmetric bilinear maps have been presented, see, e.g., \ \cite{BF01} where concrete examples of groups $\G_1,\G_T$ are given, together with an admissible bilinear map between them. 
Unfortunately, computing pairings incurs a high computational cost and represents the bottleneck of using pairings in actual protocols. In general, $\G_1$ is an abelian variety over some fields. For instance, $\G_1$ is typically the group of points on an elliptic curve over a finite field. The choice of the elliptic curve, e.g., \ a supersingular curve or an MNT curve \cite{MNT01}, has performance implications. Whereas $\G_T$ usually is a finite field. The most common examples of bilinear maps are the (modified) Weil pairing and Tate pairing of supersingular elliptic curves. These maps are non-trivial to compute, and the known algorithms used to compute them, e.g., \ the Miller’s algorithm \cite{miller2004weil}, and refinements thereof \cite{blake2006refinements},  are computationally expensive. 
For an overview of elliptic curves, we refer the interested reader to \cite{Sil09, page2006comparison}. 

\begin{notation}
In the following, $g$ and $h$ denote random generators of $\G_1$ and $\G_2$, respectively. 
\end{notation}

We are ready to introduce all the necessary assumptions.

\begin{ass}[CDH]
The \textit{Computational Diffie-Hellman}  problem in $\G_1$ is to compute the element $g^{ab}$ given the tuple $\left[ g,g^a,g^b\right]$ for random $a,b\in\Z_q$.
The CDH assumption holds if the CDH problem is hard to solve.
\end{ass}

\begin{ass}[DDH]
The \textit{Decisional Diffie-Hellman}  problem in $\G_1$ is, given the tuple $\left[ g, g^a,g^b\right]$, to distinguish between $g^{ab}$ and $ g^c$,
where $a,b,c$ are random in $\Z_q$.
The DDH assumption holds if solving the DDH problem is difficult.
\end{ass}

In this section, $k$ denotes a positive natural number.

\begin{ass}[PDDH]
Let $k \geq 2$. The \textit{$k$-Party Decisional Diffie-Hellman}  problem in $\G_1$ is, given the tuple $\left[g, g^{a_1},\dots,g^{a_k}\right]$, to distinguish between $g^{a_1 \cdots a_k}$ and $g^{b}$, where $a_1, \dots, a_k, b$ are random in $\Z_q$.
The $k$-PDDH assumption holds if solving the $k$-PDDH problem is hard.
\end{ass}
Note that 2-PDDH assumption is equal to the DDH assumption. The 3-PDDH is first introduced in \cite{boneh2006fully}.

Many cryptographic schemes are based on the \textit{discrete logarithm problem}, that is, given  $g_1,g_2\in\G$ for a finite group $\G$, where $g_2=g_1^a$ for some $a \in \mathbb{Z}_{|G|}$, to find $a$.
Bilinear maps can provide a connection between the discrete logarithm problems of different groups.
Moreover, the DDH problem is affected by these maps.
The following result is well-known, see for instance \cite{BF01}.
\begin{lem}
If there exists an admissible bilinear map $e:\G_1\times\G_1\rightarrow\G_T$, then 
\begin{enumerate}
    \item the DDH problem in  $\G_1$ is easy to solve;
    \item the discrete logarithm problem in $\G_1$ can be reduced to the discrete logarithm problem in $\G_T$ (MOV reduction \cite{MOV93}). 
\end{enumerate}
\end{lem}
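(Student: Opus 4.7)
The plan is to handle the two claims separately, each by a short and direct reduction exploiting the bilinearity and non-degeneracy of $e$.

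For part (1), I would use $e$ to distinguish a DDH tuple from a random tuple in a single pairing comparison. Given a challenge $[g, g^a, g^b, Z]$ where I must decide whether $Z = g^{ab}$ or $Z = g^c$ for uniformly random $c$, I compute the two pairings $e(g^a, g^b)$ and $e(g, Z)$ in $\G_T$. By bilinearity, $e(g^a, g^b) = e(g,g)^{ab}$ and $e(g, Z) = e(g,g)^{z}$, where $z$ is the discrete log of $Z$ with respect to $g$. Since $e(g,g) \neq 1_{\G_T}$ by non-degeneracy and $\G_T$ has prime order $q$, $e(g,g)$ is a generator of $\G_T$, so these two elements of $\G_T$ coincide if and only if $ab \equiv z \pmod{q}$. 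Thus equality holds with probability $1$ when $Z = g^{ab}$ and with probability $1/q$ (negligible) when $Z = g^c$, giving an efficient distinguisher. I would just need to note the straightforward efficiency: admissibility guarantees $e$ is computable in polynomial time.

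For part (2), I would describe the MOV reduction as transporting a discrete logarithm instance in $\G_1$ to one in $\G_T$ via $e$. Given $(g_1, g_1^a) \in \G_1^2$ with unknown $a$, I set $u = e(g_1, g_1)$ and $v = e(g_1, g_1^a)$. Again by non-degeneracy $u \neq 1_{\G_T}$ and by primality of $q$ it generates $\G_T$; by bilinearity $v = u^a$. Hence a single invocation of a discrete logarithm oracle in $\G_T$ on the pair $(u, v)$ returns $a$, and the reduction is polynomial-time since pairing computation is efficient.

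The main obstacle (really the only subtlety) is ensuring that the pairing values used in each reduction are nontrivial, i.e.\ that $e(g,g)$ is actually a generator of $\G_T$; both claims would collapse if $e(g,g) = 1_{\G_T}$. This is precisely where non-degeneracy together with the fact that $\G_T$ has prime order $q$ is invoked, and I would state this observation explicitly before carrying out each calculation. Apart from that, both parts are one-line applications of bilinearity, so no deeper argument is needed.
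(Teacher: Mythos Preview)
Your proposal is correct and follows essentially the same argument as the paper: for (1) compare $e(g^a,g^b)$ with $e(g,Z)$, and for (2) map the instance to $(e(g,g),e(g,g)^a)$ in $\G_T$ and invoke a discrete-log oracle there. Your write-up is in fact slightly more careful than the paper's, since you make explicit why non-degeneracy together with the prime order of $\G_T$ forces $e(g,g)$ to be a generator.
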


\begin{proof}
\ 
\begin{enumerate}
    \item Given $\left[ g,g^a, g^b\right]$ and a fourth element $g^c$, to verify whether $c$ is a random element in $\Z_q$ or $c=ab$, it is sufficient to compute and compare $e(g^a,g^b)=e(g^{ab},g)$ and $e(g^c,g)$.
    \item Given $\left[g,g^a\right]$ in $\G_1$, one can compute $e(g, g^a) = e(g,g)^{a}$.  Since $e(g,g)$ is not the identity of $\G_T$ due to the non-degeneracy of $e$, if an adversary can solve the discrete logarithm problem in $\G_T$, it is able to find $a$.
\end{enumerate}
\end{proof}
Note that even assuming the existence of a bilinear map from $\G_1$ to $\G_T$, the CDH problem in $\G_1$ can still be hard.

Since the DDH problem can be easily solved with a bilinear map, several variants of this assumption have been proposed, and many cryptosystems base their security on these variants.

\begin{ass}[XDH and SXDH]
Let $(q,\G_1,\G_2,\G_T,e)$ be an asymmetric bilinear group such that the CDH problem is intractable in both $\G_1$ and $\G_2$.
The \textit{external Diffie-Hellman} (XDH) assumption states that the DDH problem is  intractable in $\G_1$. 
If the DDH problem is also intractable in $\G_2$, we have the \textit{symmetric external Diffie-Hellman} (SXDH) assumption.
\end{ass}
\begin{remark}
 In practice, to our knowledge, there is no group in which the CDH problem is intractable, but the DDH problem is not.
\end{remark}

\begin{ass}[BDH]
The (symmetric and computational) \textit{Bilinear Diffie-Hellman}  problem for $(q,\G_1,\G_T,e)$ is to compute $e(g,g)^{abc}$ given $\left[ g,g^a,g^b,g^c\right]$ for random $a,b,c\in\Z_q$.
The BDH assumption holds if the BDH problem is challenging to solve.
\end{ass}

\begin{remark}
The BDH problem in $(q,\G_1,\G_T,e)$ is no harder than the CDH problem in $\G_1$ or $\G_T$.
That is, an algorithm that can solve the CDH problem in $\G_1$ or $\G_T$ is sufficient for solving BDH in $(q,\G_1,\G_T,e)$.
\end{remark}

\begin{ass}[DBDH]
Consider now $a,b,c,x\in\Z_q$ taken at random. The \textit{Decisional Bilinear Diffie-Hellman}  assumption in  $(q,\G_1,\G_T,e)$ states that, given the tuple $\left[ g,g^a,g^b,g^c\right]$, it is hard to distinguish $e(g,g)^{abc}$ from a random element of $\G_T$.\end{ass}
In \cite{BW06}, we find the DBDH problem for $(q,\G_1,\G_2,\G_T,e)$ in its asymmetric version: to distinguish between $e(g,h)^{abc}$ and a random element in $\G_T$, given a tuple $\left[ g,g^a,g^c,h,h^a,h^b\right]$ for random $a,b,c\in\Z_q$ and $g,h$ in $\G_1,\G_2$, respectively.

In \cite{SW05}, the \textit{Decisional Modified Bilinear Diffie-Hellman} assumption is described.
\begin{ass}[DMBDH]
The \textit{Decisional MBDH} assumption in $(q,\G_1,\G_T,e)$ holds if, given $\left[ g,g^a,g^b,g^c\right]$, it is hard to distinguish $e(g,g)^\frac{ab}{c}$ from $e(g,g)^z$, for random $a,b,c,z\in\Z_q$. 
\end{ass}

Another modification of the DBDH assumption is the \textit{Decisional Bilinear Diffie-Hellman Inversion} assumption that we present below together with the corresponding computational version.
\begin{ass}[DBDHI]
The $k$-BDHI problem for $(q,\G_1,\G_T,e)$ is defined as follows:
given the tuple $\left[ g,g^x,g^{x^2},\ldots,g^{x^k}\right]$ for random $x\in\Z_q$, to compute $e(g,g)^{1/x}$.
The BDHI assumption holds if this problem is intractable.
The \textit{decisional $k$-BDHI} problem   
requires then to distinguish $e(g,g)^{1/x}$ from a random element in $\G_T$. The decisional $k$-BDHI assumption holds if this problem is challenging.
\end{ass}

The following result is well-known (see for example \cite{BB04}), but for the sake of completeness, we present proof.

\begin{lem}
The 1-BDHI assumption is equivalent to the standard BDH assumption.
\end{lem}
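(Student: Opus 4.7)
The plan is to prove the two implications as explicit reductions between the two problems.

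For the direction \emph{BDH oracle yields 1-BDHI}, I would use a single BDH query obtained by changing the generator. Given a 1-BDHI instance $(g, g^x)$, set $g' := g^x$, so that $g = (g')^{1/x}$. Then $(g', g, g, g)$ is a legitimate BDH input whose hidden exponents are all equal to $1/x$, and the oracle returns $e(g', g')^{(1/x)^3} = e(g,g)^{x^2/x^3} = e(g,g)^{1/x}$, which solves the 1-BDHI challenge.

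For the reverse direction \emph{1-BDHI oracle yields BDH}, the core observation is that applying the 1-BDHI oracle to a pair of the form $(g^u, g)$---reading $g^u$ as the generator, with implicit exponent $1/u$---returns $e(g^u, g^u)^u = e(g,g)^{u^3}$. Thus the oracle effectively cubes in the exponent, lifted to $\G_T$. The strategy is then to invoke the polynomial identity
\begin{equation*}
24\, abc \;=\; (a+b+c)^3 - (-a+b+c)^3 - (a-b+c)^3 - (a+b-c)^3,
\end{equation*}
which can be checked by symbolic expansion (all symmetric degree-three terms in $a,b,c$ cancel). For each of the four sign patterns $u$, the element $g^u$ is formed as a product of $g^{\pm a}, g^{\pm b}, g^{\pm c}$; a 1-BDHI call then yields $e(g,g)^{u^3}$, and combining the four outputs multiplicatively in $\G_T$ produces $e(g,g)^{24 abc}$. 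Since $q \ge 5$ implies $\gcd(24, q) = 1$, raising to $24^{-1} \bmod q$ recovers $e(g,g)^{abc}$.

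The main obstacle I foresee is that each oracle query requires $u \ne 0$, which is not guaranteed for the given $(a,b,c)$. I would handle this by rerandomizing first: replace $(g^a, g^b, g^c)$ by $\bigl((g^a)^{r_1}, (g^b)^{r_2}, (g^c)^{r_3}\bigr)$ for independent $r_1, r_2, r_3 \in \Z_q^*$, so that the effective exponents $ar_1, br_2, cr_3$ become independent and uniform. Each of the four $u$-values is then nonzero except with probability $1/q$, so the reduction succeeds except with probability at most $4/q$ per attempt. Finally, $e(g,g)^{abc}$ is recovered from $e(g,g)^{abc \cdot r_1 r_2 r_3}$ by exponentiating to $(r_1 r_2 r_3)^{-1}$.
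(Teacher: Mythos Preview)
Your argument is correct. The first direction (a BDH oracle solves 1-BDHI) is exactly the paper's: rename the generator so that the hidden exponent becomes $1/x$, and read off $e(g,g)^{1/x}$ from a single BDH call.

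For the second direction you take a genuinely different route. The paper uses the seven-term identity
\[
6\,abc=(a+b+c)^3-(a+b)^3-(b+c)^3-(a+c)^3+a^3+b^3+c^3,
\]
and hence seven oracle calls together with an inversion of $6$ modulo $q$; you use the four-term identity
\[
24\,abc=(a+b+c)^3-(-a+b+c)^3-(a-b+c)^3-(a+b-c)^3,
\]
hence four calls and an inversion of $24$. Both are valid since $q\ge5$ is prime, so $6$ and $24$ are units in $\Z_q$. Your version is slightly leaner in the number of oracle invocations. You are also more careful than the paper about the degenerate inputs: the paper silently ignores the event that some $d\in\{a,b,c,a{+}b,\dots\}$ vanishes (a negligible event over random $a,b,c$), whereas you explicitly rerandomise. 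One small imprecision: after rerandomising, $ar_1,br_2,cr_3$ are independent and uniform over $\Z_q^*$ only when $a,b,c$ are themselves nonzero; if some $g^a,g^b,g^c$ equals $1_{\G_1}$ you can simply output $1_{\G_T}$ directly, which completes the argument.
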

\begin{proof}
Let $\mathcal{A}$ be an adversary. Assume that $\mathcal{A}$ can solve 
the BDH problem and it is given $\left[ g,g^x\right]$. $\mathcal{A}$ sets $t=g^x$ and $a=b=c=\frac{1}{x}$ ($t^a=g$). Solving the BDH problem over $\left[ t,t^a,t^b,t^c\right]$ determines $e(t,t)^{abc}$. Since $e(t,t)^{abc}=e(g^x,g^x)^\frac{1}{x^3}=e(g,g)^\frac{1}{x}$, the adversary can solve the  1-BDHI problem.\\
On the other hand, assume now that $\mathcal{A}$ can solve 
 the 1-BDHI problem and takes in input $\left[ g,g^a,g^b,g^c\right]$. 
For $d \in \Z_q$, if $\mathcal{A}$ sets $t=g^d$ then $\left[ g,g^d\right]=\left[ t,t^\frac{1}{d}\right]$ and it can compute $e(t,t)^d=e(g,g)^{d^3}$. Since $\gcd(6,q) = 1$ and
$$abc = \dfrac{1}{6} \left[(a+b+c)^3 - (a+b)^3-(b+c)^3-(a+c)^3+a^3+b^3+c^3\right],$$
$\mathcal{A}$ can compute the term $e(g,g)^{abc}$ from the terms $e(g,g)^{d}$, with $d \in \{a,b,c,a+b,a+c,b+c,a+b+c\}$. It follows that $\mathcal{A}$ can solve the BDH problem, as well.
\end{proof}

\begin{remark}
It is not known whether the $k$-BDHI assumption, for $k \geq 2$, is equivalent to BDH.
Instead, when considering the decisional formulation of the mentioned assumptions, the decisional BDH assumption is no stronger than the decisional 1-BDHI assumption.
Indeed, the first implication in the previous proof can be modified using the decisional formulation without affecting its validity.
\end{remark}

\begin{ass}[DBDHE]
The \textit{Decisional $k$-Bilinear Diffie-Hellman Exponent}  problem for $(q,\G_1,\G_T,e)$ is defined as follows: given a tuple $$\left[ g_1,g_1^a, g_1^{a^2},\ldots,g_1^{a^k},g_1^{a^{k+2}},\ldots,g_1^{a^{2k}},g_2\right], $$
where $a\in\Z_q$ and $g_1,g_2\in\G_1$ are chosen at random, to distinguish $e(g_1,g_2)^{a^{k+1}}$ from a random element in $\G_T$. The decisional $k$-BDHE assumption holds if the decisional $k$-BDHE problem is difficult to solve.
\end{ass}
The following assumptions are both modifications of the DBDHE assumption introduced in \cite{Wat11} and \cite{Gen06}, respectively. 

\begin{ass}[DPBDHE]
The \textit{Decisional Parallel $k$-Bilinear Diffie-Hellman Exponent} problem for $(q,\G_1,\G_T,e)$ is defined as follows.  
Consider $a,s,b_1,\ldots,b_k\in\Z_q$  chosen at random.
The problem consists in, given \begin{align*}
\Big[&g,g^s,g^a,\ldots,g^{a^k},g^{a^{k+2}},\ldots,g^{a^{2k}} \\
&\forall \ 1\le j\le k, \ g^{s b_j},g^{a/b_j},\ldots,g^{a^k/b_j},g^{a^{k+2}/b_j},\ldots,g^{a^{2k}/b_j},\\
&\forall \ 1\le i,j\le k, i\ne j, \  g^{a s b_i/b_j},\ldots,g^{a^k s b_i/b_j}\Big],
\end{align*}
 distinguishing  $e(g,g)^{a^{k+1}s}\in\G_T$ from a random element in $\G_T$. The decisional $k$-PBDHE assumption holds if the decisional $k$-PBDHE problem is hard to solve. 
\end{ass}

\begin{ass}[DABDHE]
The \textit{Decisional $k$-Augmented Bilinear Diffie-Hellman Exponent} problem for $(q,\G_1,\G_T,e)$ consists in, giving the tuple
$$\left[ g_1,g_1^a, g_1^{a^2},\ldots,g_1^{a^k},g_1^{a^{k+2}},\ldots,g_1^{a^{2k}},g_2,g_2^{a^{k+2}}\right], $$
 distinguishing $e(g_1,g_2)^{a^{k+1}}$ from a random element of $\G_T$,
where $a\in\Z_q$ and $g_1,g_2\in\G_1$ are  chosen at random.
The decisional $k$-ABDHE assumption holds if the DABDHE problem is difficult to solve.
Gentry in \cite{Gen06} also presents a truncated version of this problem, where the terms $g_1^{a^{k+2}},\ldots,g_1^{a^{2k}}$ are omitted from the given tuple.
\end{ass}

We now consider other assumptions not directly based on the Diffie-Hellman problem. 
\begin{ass}[DLIN]
As presented in \cite{BBS04}, the  \textit{Decisional Linear}  assumption for $\G_1$  states that, given the tuple $\left[ g, g^a, g^b, g^{ac}, g^{bd} \right]$ for random $a,b,c,d\in\Z_q$, it is hard to distinguish $g^{c+d}$ from a random element of $\G_1$.
\end{ass}

The following claim is well-known, see for example \cite{BW06}, but for sake of completeness we present a proof.
\begin{lem}\label{Lemma: DLIN implies DBDH}
DLIN assumption for $\G_1$ implies DBDH assumption for $(q, \G_1, \G_T, e)$.
\end{lem}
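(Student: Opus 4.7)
The plan is to prove the contrapositive: from any PPT adversary $\cB$ solving DBDH with non-negligible advantage, build a PPT adversary $\cA$ solving DLIN with non-negligible advantage. The key observation is that although the DLIN challenge lives entirely in $\G_1$, the pairing $e$ can be used to lift its components into $\G_T$ in such a way that they form a well-formed DBDH challenge, with the unknown DLIN bit mapped onto the unknown DBDH bit.

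Concretely, $\cA$ receives a DLIN instance $\left[g,g^a,g^b,g^{ac},g^{bd},Z\right]$, where $Z$ is either $g^{c+d}$ (the ``real'' case) or $g^r$ for a fresh random $r\in\Z_q$. First, I would use the bilinearity of $e$ to compute
\[
T \;=\; e(g^{ac},g^b)\cdot e(g^{bd},g^a) \;=\; e(g,g)^{abc}\cdot e(g,g)^{abd} \;=\; e(g,g)^{ab(c+d)}.
\]
Then $\cA$ forms the tuple $\left[g,g^a,g^b,Z,T\right]$ and feeds it to $\cB$, forwarding the output bit.

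For the analysis, I would separately examine the two cases. If $Z=g^{c+d}$, then $\left[g,g^a,g^b,Z,T\right]=\left[g,g^a,g^b,g^{c+d},e(g,g)^{ab(c+d)}\right]$, which is a genuine DBDH instance whose ``$c$-exponent'' is the uniform element $c+d\in\Z_q$ (since $c,d$ are independent uniform); hence $\cB$ outputs $1$ with its real-case probability. If instead $Z=g^r$, the ``$c$-exponent'' is the uniform $r$, while $T=e(g,g)^{ab(c+d)}$ is independent of $r$ and, because $c+d$ is uniform in $\Z_q$ and $e(g,g)$ generates $\G_T$, the element $T$ is uniform in $\G_T$ and independent of the first four components; this is distributed exactly as a random DBDH challenge, so $\cB$ outputs $1$ with its random-case probability. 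Subtracting gives $\mathrm{Adv}_{\cA}^{\mathrm{DLIN}}=\mathrm{Adv}_{\cB}^{\mathrm{DBDH}}$, contradicting DLIN.

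The only real obstacle is pairing up the DLIN components correctly so that the resulting DBDH candidate target has exponent $ab(c+d)$ while the fourth group element has exponent $c+d$; once the identity $e(g^{ac},g^b)\cdot e(g^{bd},g^a)=e(g,g)^{ab(c+d)}$ is spotted, the reduction is tight and the distributional arguments are routine (uniformity of $c+d$ and of $T$ in the random case).
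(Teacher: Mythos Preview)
Your proof is correct and follows essentially the same approach as the paper: both compute $e(g^{ac},g^b)\cdot e(g^{bd},g^a)=e(g,g)^{ab(c+d)}$ and feed $[g,g^a,g^b,Z]$ together with this target element to the DBDH distinguisher. Your distributional analysis in the random case (uniformity and independence of $T$) is more explicit than the paper's, which simply asserts that distinguishing $e(g,g)^{ab(c+d)}$ from $e(g,g)^{abt}$ amounts to distinguishing $t=c+d$ from random; you might add a one-line remark that the negligible event $ab=0$ is the only place where $T$ fails to be uniform, but this does not affect the argument.
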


\begin{proof}
Let $\cA$ be an adversary and assume that $\cA$ can solve DBDH. Given the tuple $\left[ g,g^a,g^b,g^{ac},g^{bd}\right]$ the goal is to decide whether an element $g^t$ is $g^{c+d}$ or a random element in $\G_1$.
The adversary first computes $e(g^b,g^{ac})e(g^a,g^{bd})=e(g,g)^{ab(c+d)}$, then $\cA$ considers the DBDH problem on input $\left[ g,g^a,g^b,g^{t}\right]$. Since $\cA$ can distinguish whether $e(g,g)^{ab(c+d)}$ is $e(g,g)^{abt}$ or not, then $\cA$ can distinguish whether $t=c+d$ or not.
\end{proof} 

Notice that this proof cannot be directly extended to the computational cases.

In \cite{BW06}, it is presented an asymmetric version of  DLIN assumption for the bilinear group $(q,\G_1,\G_2,\G_T,e)$, where the given tuple includes also the elements $h,h^a,h^b$.
Note that the implication showed in Lemma \ref{Lemma: DLIN implies DBDH} can be extended similarly to the asymmetric version of the assumptions.

One variant of the DLIN assumption is the \textit{External Decisional Linear} assumption.
Basically, it extends the DLIN problem to an asymmetric bilinear group assuming its hardness on both groups.
\begin{ass}[XDLIN]
 Given the tuple $\left[ g, g^a, g^b, g^{ac}, g^{bd}, h, h^a, h^b, h^{ac}, h^{bd} \right]$, for random $a,b,c,d\in\Z_q$, the \textit{external DLIN}  assumption in $(q,\G_1,\G_2,\G_T,e)$ states that it is difficult to distinguish  $g^{c+d}$ from a random element of $\G_1$ as well as it is difficult to distinguish $h^{c+d}$ from a random element of $\G_2$.
\end{ass}

\subsubsection{\bf Multilinear Maps}
\begin{defn}
A \textit{$k$-linear map} is a tuple $(q,\G_1,\ldots,\G_k,\{e_{ij} : i,j\ge1, i+j\le k\})$ such that:
\begin{itemize}
\item $e_{ij}:\G_i\times\G_j\rightarrow\G_{i+j}$,
\item each $(q,\G_i,\G_j,\G_{i+j},e_{ij})$ is a bilinear group.
\end{itemize}
Let  $g_i$ be a generator of group $\G_i$ and let $g=g_1$.
\end{defn}

The complexity of many protocols can be reduced if cryptographic multilinear maps are used; see, for instance, \cite{BS03}.
In practice, multilinear maps are challenging to construct,  as mentioned in \cite{GVW15}. Some candidates are presented in \cite{garg2013candidate,coron2013practical,gentry2015graph}, but these multilinear maps suffer from a class of attacks known as “zeroizing” attacks \cite{cheon2015cryptanalysis}, which render them unusable for many applications. To our knowledge, there is only one currently published multilinear map that has not been attacked \cite{ma2018mmap}.

\begin{ass}[MDHE]
Consider $c_1,\ldots,c_k\in\Z_q$ chosen at random and let $\ell \geq 2$ be a natural number. 
The \textit{$(k,\ell)$-Multilinear Diffie-Hellman Exponent} problem for $(q,\G_1,\ldots,\G_k,\{e_{ij}\})$, is, given the following tuple
$$\left[ g^{c_1},\ldots,g^{c_1^\ell},g^{c_1^{\ell+2}},\ldots,g^{c_1^{2\ell}},g^{c_2},\ldots,g^{c_k}\right],$$
to distinguish between $g_k^{c_1^{\ell+1}\prod_{2\le i\le k}c_i}$ and  a random element of $\G_k$.
The MDHE assumption holds if the MDHE problem is hard to solve.
\end{ass}
 Note that for $k=2$ this corresponds to the BDHE problem.

\subsubsection{\bf Bilinear Groups of Composite Order}
We now consider cyclic groups of composite order $n$.
In particular, we consider now  {\em symmetric} bilinear groups $(n,\G_1,\G_T,e)$ as in Definition \ref{def:BilGrp} but with order $n=pq$, where $p$ and $q$ are two distinct primes. In this case, the groups are still cyclic but not every element is a generator. 
If one knows the factorisation of $n$, namely $p$ and $q$, then the following problems lose interest. 
For this reason, the adversary is supposed not to get the primes $p$ and $q$, and we have to assume the hardness of finding a non-trivial factorisation of $n$. 

As noted in \cite{KSW08}, using composite order groups typically gives a simpler scheme; however, since the group sizes are larger, group operations are less efficient.

 Recall that for any finite group $\G$ of order $n$ and for any divisor $m$ of  $n$, there exists a unique subgroup of $\G$ of order $m$. 
 
 \begin{notation}
 In the following, $\G_p$ and $\G_q$ denote the subgroups of $\G_1$ of order $p$ and $q$, respectively. Moreover, $g_p$ and $g_q$ denote random generators of $\G_p$ and $\G_q$, respectively. 
 \end{notation}

In this context, the BDH assumption takes the following form.
\begin{ass}[DcBDH]
The \textit{Decisional composite Bilinear Diffie-Hellman}  assumption for $(n,\G_1,\G_T,e)$  states that, given $\left[ g_q, g_p, g_p^a, g_p^b, g_p^c\right]$, for $a,b,c\in\Z_p$ all chosen at random, it is difficult to distinguish between $e(g_p,g_p)^{abc}$ and a random element in $\G_T$.
\end{ass}

\begin{ass}[Dc3DH]
The \textit{Decisional composite 3-party Diffie-Hellman}  assumption for $(n,\G_1,\G_T,e)$  states that, given the tuple $\left[ g_q, g_p, g_p^a, g_p^b, g_p^{ab} R_1, g_p^{abc} R_2\right]$, for $a,b,c\in\Z_p$, and $R_1,R_2,R_3\in\G_q$, all chosen at random, it is hard to distinguish between $g_p^c R_3$ and a random element in $\G_1$.
\end{ass}

Notice that by knowing $p$ and $q$ and an admissible bilinear map $e:\G_1\times\G_1\rightarrow\G_T$, the assumption is,
as we show now, false. 
Recall that any element $R\in\G_q$ satisfies $R^q=1_{\G_1}$, where $1_{\G_1}$ stands for the identity element of $\G_1$.
Hence, given $\left[ g_q, g_p, g_p^a, g_p^b, g_p^{ab} R_1, g_p^{abc} R_2\right]$ and another element $Z\in\G_1$, we only need to verify whether $e((g_p^{ab} R_1)^q,Z^q)$ equals $e((g_p^{abc} R_2)^q,g_p^q)$.
Indeed if $Z=g_p^c R_3$ then $e((g_p^{ab} R_1)^q,(g_p^c R_3)^q)=e(g_p^{abq},g_p^{cq})=e(g_p,g_p)^{abcq^2}$ and $e((g_p^{abc} R_2)^q,g_p^q)=e(g_p,g_p)^{abcq^2}$.\\

We further consider groups of order $n$, where $n$ is a product of three distinct primes, namely $n=pqr$, and a symmetric bilinear group $(n,\G_1,\G_T,e)$.

\begin{notation}
In the following, $\G_p,\G_q,$ and $\G_r$ denote the subgroups of $\G_1$ of order $p,q$ and $r$, respectively. Moreover, $g_p,g_q,$ and $g_r$ denote, respectively, their generators.
\end{notation}
The following assumption has been introduced in \cite{lewko2010new}.
\begin{ass}[SD3]
The \textit{Subgroup decision problem for three primes} for the bilinear group $(n,\G_1,\G_T,e)$ asks, given the tuple $\left[ g_p,g_r\right]$, to decide whether an element of $\G_1$ is an element of the form $g_p^{a}g_q^{b} \in  \G_{pq}$ or of the form $g_p^{c} \in \G_p$, where $a,c \in \Z_p$, $b \in \Z_q$, and $\G_{pq}$ is the subgroup of order $pq$ in $\G_1$.
In other words, the SD3 problem asks to distinguish whether a given element of $\G_{pq}$ is an element of $\G_p$ or not.
The SD3 assumption holds if the SD3 problem is hard to solve. 
\end{ass}

The following two assumptions are introduced in \cite{KSW08}.
\begin{ass}[D3c3DH] For $(n,\G_1,\G_T,e)$, 
choose random $Q_1,Q_2,Q_3\in\G_q$, random $R_1,R_2,R_3\in\G_r$ and random $a,b,s\in\Z_p$.
Given $$\left[ g_p, g_r, g_qR_1, g_p^b, g_p^{b^2}, g_p^ag_q, g_p^{ab}Q_1, g_p^s, g_p^{bs}Q_2R_2\right],$$
it is supposed to be hard to distinguish between $g_p^{b^2s}R_3$ and $g_p^{b^2s}Q_3R_3$.
We name this assumption the \textit{Decisional 3-composite 3-party Diffie-Hellman}  assumption.
\end{ass}

As for the SD3 assumption, knowing $q$ and $pr$, a non-trivial factorisation of $n$, makes the hypothesis trivial. Indeed, the problem of distinguishing whether an element $T\in\G_1$ is of the form $g_p^{b^2s}R_3$ or $g_p^{b^2s}Q_3R_3$ can be solved by computing $T^{pr}$.
Indeed, $(g_p^{b^2s}R_3)^{pr}=1_{\G_1}$ and $(g_p^{b^2s}Q_3R_3)^{pr}=Q_3^{pr}$.
Moreover, note that if the element $g^{b^2s}$ is somehow known, then the D3c3DH problem can be trivially reduced to the SD3 problem.

\begin{ass}[D3c3BDH]
The \textit{Decisional 3-composite 3-party Bilinear Diffie-Hellman}  assumption for $(n,\G_1,\G_T,e)$ states that, chosen at random $h\in\G_p$, $Q_1,Q_2\in\G_q$ and $s,t\in\Z_q$, given $$\left[ g_p, g_q, g_r, h, g_p^s, h^sQ_1, g_p^tQ_2, e(g_p,h)^t\right],$$ it is hard to distinguish between $e(g_p,h)^{st}$ and a random element in $\G_T$.
\end{ass}

\subsection{Matrix Diffie-Hellman Assumptions}
Let $q \in \N$ be a prime and $\G$ be a multiplicative group of order $q$. Denote by $g$ a generator of $\G$. For an element $a \in \Z_q$, we define $[a] = g^a$ as the implicit representation of $a$ in $\G$. More generally, for a matrix $\mathbf{A} =(a_{ij}) \in \Z_q^{n\times m}$, we define $[\mathbf{A}]$ as the implicit representation of $\mathbf{A}$ in $\G$, that is $[\mathbf{A}] = (g^{a_{ij}})$.
\begin{defn}
Let $\ell, k \in \N$, with $\ell >k$. We call $\mathcal{D}_{\ell,k}$ a \textit{matrix distribution} if its values are 
matrices in $\Z_q^{\ell \times k}$ of full rank $k$. 
\end{defn}
We assume that matrices can be obtain from  $\mathcal{D}_{\ell,k}$  in polynomial time, with overwhelming probability.

\begin{ass}[MDDH]
The \textit{$\mathcal{D}_{\ell,k}$-Matrix  Decisional Diffie-Hellman} assumption for $\G$ states that, given $\left[\mathbf{A}\right]$, where $\mathbf{A}$ is drawn from a matrix distribution $\mathcal{D}_{\ell,k}$, it is hard to distinguish between $[\mathbf{A}\mathbf{v}]$ and $[\mathbf{u}]$, with $\mathbf{v} \in (\Z_q)^k$ and $ \mathbf{u} \in (\Z_q)^{\ell}$ chosen at random.
\end{ass}

The MDDH assumption was introduced in \cite{escala2017algebraic}. The following is its bilateral version, defined for asymmetric bilinear groups $(q,\G_1, \G_2, \G_T,e)$. Let $\mathbf{A} =(a_{ij}) \in \Z_q^{n\times m}$ be a matrix. Then, $[\mathbf{A}]_i$ denotes the implicit representation of $\mathbf{A}$ in $\G_i$, for $i=1,2$.

\begin{ass}[BMDDH]
The \textit{Bilateral $\mathcal{D}_{\ell,k}$-Matrix Decisional Diffie-Hellman}  assumption for $(q,\G_1, \G_2, \G_T,e)$ states that, given $\left[\left[\mathbf{A}\right]_1,\left[\mathbf{A}\right]_2\right]$, where $\mathbf{A}$ is drawn from a matrix distribution $\mathcal{D}_{\ell,k}$, it is hard to distinguish between the two tuples $\left[[\mathbf{A}\mathbf{v}]_1, [\mathbf{A}\mathbf{v}]_2\right]$ and $\left[[\mathbf{u}]_1,[\mathbf{u}]_2\right]$, with $\mathbf{v} \in (\Z_q)^k$ and $ \mathbf{u} \in (\Z_q)^{\ell}$ chosen at random.\\
\end{ass}

\subsection{Dual-Pairing Vector Space (DPVS)}  The DPVS \cite{okamoto2008homomorphic,OT09} is defined by the tuple $( q, \mathbb{V}, \mathbb{W}, \G_T, \mathbb{A}, \mathbb{B}, \tilde{e})$, which is directly constructed from the bilinear group $(q, \G_1, \G_2, \G_T, e)$, where $q$ is a prime. Assume $\G_1, \G_2$ and $\G_T$ are multiplicative cyclic groups of order $q$, and let $g,h$ be generators of $\G_1,\G_2$, respectively. In detail, $\mathbb{V} = (\G_1)^n$ and $\mathbb{W} = (\G_2)^n$ are $n$-dimensional vector spaces over $\mathbb{Z}_q$, where $\textbf{x}=(g^{x_1},\ldots,g^{x_n})\in \mathbb{V}$ and $\textbf{y}=(h^{y_1},\ldots,h^{y_n})\in \mathbb{W}$, with $x_i,y_i\in\mathbb{Z}_q$. Given $\textbf{x}, \textbf{z} \in \mathbb{V}$, the sum $\textbf{x} + \textbf{z}$, required in the definition of a vector space, is here replaced by a component-wise multiplication, that is, $\textbf{x} + \textbf{z} = \left(g^{(x_1+z_1)}, \ldots, g^{(x_n+z_n)}\right)$, and, given $a \in \mathbb{Z}_q$, the scalar multiplication is defined as $a\textbf{x} = (g^{ax_1}, \ldots, g^{ax_n})$. The operations on $\mathbb{W}$ are defined in a similar way. $\mathbb{A}$ and $\mathbb{B}$ are canonical bases for $\mathbb{V}$ and $\mathbb{W}$, respectively.
That is, $\mathbb{A}=\{\textbf{e}_1,\ldots,\textbf{e}_n\}$ where $\textbf{e}_1=(g,1_{\G_1},\ldots,1_{\G_1})$, $\textbf{e}_2=(1_{\G_1},g,1_{\G_1},\ldots,1_{\G_1})$, \ldots, $\textbf{e}_n=(1_{\G_1},\ldots,1_{\G_1},g)$, and $\mathbb{B}$ is defined similarly.
Finally, $\tilde{e}: \mathbb{V} \times \mathbb{W} \rightarrow \G_T$ is a pairing defined by $\tilde{e} (\mathbf{x}, \mathbf{y}) = \prod_{i=1}^n e(g^{x_i}, h^{y_i}) \in \G_T$.\\

Three typical constructions are given in \cite{okamoto2008homomorphic} and \cite{takashima2008efficiently}. Related to DPVS, we find the \textit{Decisional Subspace Problem} (DSP).

\begin{ass}[DSP]
Let $n_1,n_2\in\mathbb{N}$, with  $n_2+1<n_1$.
The \textit{Decisional Subspace Problem} ${\rm DSP}_{(n_1,n_2)}$ assumption for $( q, \mathbb{V}, \mathbb{W}, \G_T, \mathbb{A}, \mathbb{B}, \tilde{e})$ states that, given a base $\A_1=\{\textbf{a}_1,\ldots,\textbf{a}_n\}$ of $\V$,  it is hard to distinguish $\textbf{v}=v_{n_2+1}\textbf{a}_{n_2+1}+\cdots+v_{n_1}\textbf{a}_{n_1}$ from $\textbf{u}=v_1\textbf{a}_1+\cdots+v_{n_1}\textbf{a}_{n_1}$, where   $v_1,\ldots,v_{n_1}$ are chosen at random in $\Z_q$.
\end{ass}

In \cite{okamoto2008homomorphic}, two variants of the DSP assumption are introduced: \textit{Decisional Subspace Problem with Relevant Dual Vector Tuples} (RDSP) and \textit{Decisional Subspace Problem with Irrelevant Dual Vector Tuples} (IDSP). Since a complete description of these two variants is rather long and quite technical, we refer the interested reader to \cite[Chapter 3]{OT09}. Moreover, the authors show that DSP is intractable if the generalised DDH or DLIN problem is intractable.\\

\subsection{Learning With Errors (LWE)}

The LWE problem was introduced by Regev \cite{Reg09}, who showed that solving it on average is as challenging as solving, with a quantum algorithm, several standard lattice problems in the worst case.

For positive $\ell, n, m \in \mathbb{N}$, with $\ell \ge2$, and an error distribution $\chi$ over $\Z_{\ell}$, the (\textit{decisional}) \textit{Learning With Errors} problem is to distinguish between the following pairs of distributions: \begin{center}
$\{A,As+x\}$ and $\{A,u\}$
\end{center}
where $A\in\Z_{\ell}^{n\times m}$, $s\in\Z_{\ell}^n$, $x\in\chi^m$ and $u\in\Z_{\ell}^m$ are all chosen at random.\\

\subsection{Multivariate Quadratic polynomial (MQ) problem} 
The \textit{Multivariate Quadratic polynomial}  problem consists in solving systems of multivariate quadratic equations over finite fields. In detail, denote by $\mathbb{F}_r$ the finite field with $r$ elements, where $r$ is a prime power. Given a system of multivariate polynomials $\{p_1(x_1, \dots, x_n), \dots, p_k(x_1, \dots, x_n)\}$ of degree 2 over $\mathbb{F}_r$, the (computational) MQ problem requires to find  a solution ${\bf x} \in \left(\mathbb{F}_r\right)^n$ such that $p_1({\bf x}) = \cdots = p_k({\bf x}) = 0$. The  MQ problem is proved to be NP-complete \cite{garey1979computers, patarin1997trapdoor}, even for polynomials of degree 2 over $\mathbb{F}_2$.  The relevance, in cryptography, of the MQ problem, relies on its quantum resistance. 

\subsection{Number theory assumptions}

We  mention also two assumptions belonging to the field of number theory.
\begin{defn}
Given three integers $a$, $k$ and $N$,
 $a$ is a \textit{$k$-th residue} modulo $N$ if there exists an integer $b$ such that  $a\equiv b^k \bmod N$.
 If $k$ is equal to 2, then $a$ is called \textit{quadratic residue} modulo $N$. 
\end{defn}

\begin{ass}[QR]
  The \textit{Quadratic Residuosity} problem is to decide, given two integers $a$ and $N$, whether $a$ is a quadratic residue modulo $N$ or not.
  The QR assumption holds if the QR problem is intractable.
\end{ass}
  
Notice that, for an odd prime $p$, it holds
$$
a^\frac{p-1}{2}\bmod p=
\begin{cases}
0 & \mbox{if }a\equiv0 \bmod p,\\
1 & \mbox{if $a$ is a quadratic residue},\\
-1 & \mbox{otherwise}.
\end{cases}
$$

Moreover, for  a prime factor $p$ of $N$, $a$ is a quadratic residue modulo $N$ only if $a$ is a quadratic residue modulo $p$.
Therefore, in the formulation of the QR problem, we assume that no non-trivial factorisation of $N$ is known.

\begin{ass}[DCR]
Let $p,q,p',q'$ be distinct odd primes with $p=2p'+1$ and $q=2q'+1$, and such that $p'$ and $q'$ have the same bit length. Let $N=pq$.
The \textit{Paillier’s Decisional Composite Residuosity} (DCR) assumption states that, given only $N$, it is hard to decide whether an element of $\mathbb{Z}_{N^2}$ is an $N$-th residue modulo $N^2$ or not.
\end{ass}

\section{Security}\label{chap: security}
The idea of security for an FE scheme is that even collusion of secret keys (functional decryption keys) for different functionalities does not reveal anything more about a plaintext than what each secret key allows a user to learn.
Two different models of security for an FE scheme have been used mainly: \textit{indistinguishability-based security}  and \textit{simulation-based security}, shortly IND-based and SIM-based security. 
The former  requires that an adversary cannot distinguish between ciphertexts of any two messages $m_0,m_1$ of its choosing, with access to some secret keys $\mathrm{sk}_f$ of functions $f$ such that $f(m_0) = f(m_1)$. By contrast, the latter requires that the adversary's view can be simulated by a simulator, given only access to secret keys and functions evaluated on the corresponding messages.
 SIM-based security has higher security strength than IND-based security. 
Indeed, SIM-based security implies IND-based security, and there exist  IND-based secure FE schemes for certain functionalities which are not able to be proved secure under SIM-based security, see the example in Subsection \ref{sec: sim-sec}.

In the following, we present in more detail the aforementioned security models and other related notions.

\subsection{Indistinguishability-based security}
In this security game, an adversary $\cA$ receives the public key of the encryption scheme, and then it can request functional decryption keys for functions $f$ of its choice.
$\cA$ also sends two messages, $m_0$ and $m_1$, to the challenger, which samples a random bit $b$ and sends back encryption of the message $m_b$.
We require that the functional decryption keys, obtained by the adversary, are associated with functions $f$ that do not distinguish the two messages, for which $f(m_0)=f(m_1)$.
The system is IND-based secure if the probability that the adversary guesses which bit $b$ was used is close\footnote{Two probabilities are close if their difference is a negligible function concerning the security parameter.} to the probability of a random guess, corresponding to $1/2$.

When the adversary can choose the challenge messages adaptively, we refer to \textit{adaptive security} (also called \textit{full security}). 
In this setting, the game for an adversary $\cA$ can be described as follows.
\begin{itemize}
\item \texttt{Setup}: run $\texttt{Setup}(1^\lambda)\rightarrow({\rm pp},{\rm mk})$ and give $\rm pp$ to $\cA$.
\item \texttt{Query1}: $\cA$ adaptively submits a polynomial number of queries for $f_i \in \cF$, with $i=1,2,\ldots, n$, and it is given $\texttt{KeyGen}(\mathrm{mk},f_i)\rightarrow \mathrm{sk}_{f_i}$.
\item \texttt{Challenge}: $\cA$ submits two messages $m_0,m_1\in X$ satisfying 
\begin{equation}\label{cond1}
f_i(m_0)=f_i(m_1) \mbox{ for all } i=1, \dots, n,
\end{equation}
and it is given $\texttt{Enc}(\mathrm{pp},m_b)$.
\item \texttt{Query2}: $\cA$ continues to issue key queries as in \texttt{Query1} satisfying \eqref{cond1}, and eventually outputs a bit in $\zu$.
\end{itemize}
The adversary wins the game if it guesses correctly the bit $b$.
This notion is sometimes referred  as IND-CPA, where CPA stands for chosen-plaintext attack.
This definition, taken from \cite{BSW11}, is a generalisation of related definitions from \cite{BW07,KSW08}.
An artificial but beneficial weakening of this security model is the so-called \textit{selective security}, where the adversary is required to choose between $m_0$ and $m_1$ beforehand, that is, before obtaining any functional decryption keys or even seeing the public key.
Hence, in this security model the \texttt{Query1} phase is eliminated and the \texttt{Challenge} phase is either right after or right before the \texttt{Setup} phase. 
A
transformation from selective to fully secure (IND-based) functional encryption can be found in \cite{ABSV15} for some cases. 
In literature, we can find modifications of this security game, for example, to a chosen-ciphertext one (CCA) as in \cite{BB04b}.

\subsection{Simulation-based security}\label{sec: sim-sec}
Consider the following generalisation of the FE scheme described in \cite{BSW11}.
For a key space  $K=\{k_1,\ldots,k_t\}$ consider $t$ permutation maps $\sigma_1,\ldots,\sigma_t$ of the plaintext space and the following functionalities $F(k_i,x)=\sigma_i(x)$ for $i=1,\ldots,t$.
Consider now an FE scheme with a trivial encryption algorithm $\texttt{Enc}({\rm pp},x)=x$ and such that $\texttt{Dec}({\rm sk}_i,x)=\sigma_i(x)$, where ${\rm sk}_i$ is the secret key corresponding to key $k_i\in K$.

This scheme clearly leaks more information about the plaintext $x$ than needed. 
However, if we consider an IND-game, the adversary is forced to issue identical challenge messages.
Indeed, for every secret key ${\rm sk}_i$ known by the adversary,  we require $\sigma_i(m_0)=\sigma_i(m_1)$.
This trivially implies that the scheme is IND-secure.
As shown in the example and pointed out by Boneh et al.\ in \cite{BSW11}, IND-based security is vacuous and inadequate for specific functionalities. 
This indicates that we should opt for simulation-based security whenever possible. In the definition of SIM-based security, an efficient simulator is required to generate the view of the adversary in the security game, only knowing the information that leaks from the encrypted values and corrupted functional decryption keys (possessed by the adversary).
This security notion for FE schemes was first given in \cite{ON10,BSW11}.
In the following, we report in more detail the idea of SIM-based security, as given in \cite{BSW12}.

Let $\cA$ be a polynomial time adversary that takes as input the public key $\rm pp$, a set of secret keys $\{ {\rm sk}_i=\texttt{KeyGen}(\mathrm{mk},f_i) : 1\le i\le\ell \}$ for functions $f_i\in\cF$ of its choice, and a ciphertext $c=\texttt{Enc}(\mathrm{pp},x)$.
Then $\cA$ might output some information about the decryption of $c$.
The system is SIM-based secure if, for every such $\cA$ there is another polynomial time algorithm $\cB$, called  \textit{simulator}, that given $\rm pp$ and $f_1(x), \ldots, f_\ell(x)$, but not given $c$, is able to output the same information about $x$ that $\cA$ outputs.
This implies that knowing the ciphertext $c$ does not give any additional information about  $x$, beyond that obtained from  $f_1(x), \ldots, f_\ell(x)$.

We refer the reader to \cite{BSW11} for a more formal definition of SIM-based security. We also refer the reader to \cite{Gay19} for the distinction between adaptive and selective SIM-based security.

In \cite{ON10,BSW11}, it is shown that this security notion is impossible to achieve in many cases.
For instance, \cite[Theorem 2]{BSW11} states that there are no SIM-based secure IBE schemes in the non-programmable random oracle model 
(see Subsection \ref{subsec: random oracle}).
However, the random oracle model is more promising.  In fact, in this model, it is possible to achieve SIM-based security.
 For example, as exhibited in \cite[Theorem 4]{BSW11}, given an IND-based secure predicate encryption system with a public index, it is possible to convert the system into a scheme SIM-based secure in the random oracle model.

\subsection{Function privacy}

Although the vast majority of research has focused on the privacy of the encrypted messages, it is crucial to offer privacy for the functions for which decryption keys are provided in many realistic scenarios. Indeed, the functions themselves may already leak critical or sensitive information. \textit{Function privacy} was first proposed by Shen, Shi, and Waters in the setting
of private-key predicate encryption in \cite{shen2009predicate}. It requires that a secret key $\mathrm{sk}_f$ does not reveal anything about the function $f$.
For public-key functional encryption, standard function privacy cannot be generally achieved.
Indeed,  given a key $\mathrm{sk}_f$, an adversary can generate ciphertexts of  messages $x_1,\dots, x_n$ of its choices, and thus it obtains the values $f(x_1),\dots, f(x_n)$.

Boneh, Raghunathan, and Segev in \cite{BRS13a}, and later in \cite{BRS13b}, address this problem for some public-key predicate encryption schemes.
In \cite{BRS13a}, they construct function-private IBE which implies predicate-private encrypted keyword search. Whereas, in \cite{BRS13b}, they consider the notion of \textit{subspace-membership encryption}, a generalisation of inner-product encryption that supports subspace-membership predicates. They present a generic construction of a function-private subspace-membership encryption scheme based on any inner-product encryption scheme. Further analysis on HVE can be found in \cite{BC+19}.
\subsection{Random oracle model}\label{subsec: random oracle}

The \textit{Random Oracle Model} (ROM) was introduced by Bellare and Rogaway in \cite{BR93} as an alternative model to analyse the security of certain cryptographic constructions. A \textit{random oracle} is a function $\mathcal{O}:X\rightarrow Y$ chosen uniformly at random from the set of all such functions, where  $Y$ is supposed to be a finite set. ROM assumes the availability of a random oracle to all parties (adversary, challenger, etc.) playing a role in any game aimed
at studying the security of a cryptographic primitive. The introduction of ROM made it possible to prove the security of many different kinds of cryptographic primitives. As truly random functions do not exist in practice, when using these primitives in the real world, a secure hash function plays the role of a random oracle. The heuristic is that secure hash functions are close enough to random oracles in their behavior, so the primitives remain secure even under this substitution. 

In ROM, the challenger simulates the random oracle for the adversary in
the security game. The challenger can select answers dynamically for the queries that parties make to the random oracle; 
 we refer to this property as \textit{programmability} of the random oracle. On the other hand, the \textit{non-programmable random oracle model} restricts the challenger such that it can no longer choose return values of the random oracle. This security restriction makes the underlying proofs preferable, as they rely on fewer properties of the random oracle and so can be considered as a step toward getting rid of it. A more detailed description of these models can be found in \cite{FLR+10}.

ROM is in opposition with the \textit{standard model} in which the adversary is only limited by the amount of time and computation available.
A scheme is secure in the standard model if it can be proved secure only using complexity assumptions.

Another model used is the \textit{generic-group model}, which assumes that the properties of the representation of the elements of the group under consideration cannot be exploited. This extends naturally also to bilinear maps.

\section{IBE and ABE}\label{chap: IBEandABE}
Historically, Identity-Based Encryption (IBE) has been the first attempt to construct a functional encryption scheme beyond traditional public-key encryption. IBE schemes have been then generalised into Attribute-Based Encryption (ABE) schemes.
Classical IBE and ABE systems are also referred to as predicate encryption systems with a public index or \textit{payload hiding} systems. Indeed, the payload message is hidden, but the index related to it (an identity, a set of attributes, or an access policy) is often in the clear.
In the following, we define IBE and ABE systems and briefly introduce some of the known constructions.

\subsection{Identity-Based Encryption}\label{IBE}
In an IBE scheme we assume that a plaintext to be encrypted is a pair $({\rm ID},m)\in I\times M$, where $\rm ID$ is the identity and $m$ is the payload message.
Secret keys are also associated with an identity $\rm ID^\prime$, and decryption succeeds to recover the payload message if there is a match between the identity associated with the ciphertext and the identity associated with the secret key, that is,
$$F({\rm ID}^\prime, ({\rm ID},m))=\begin{cases}
m & {\rm if}\ {\rm ID}={\rm ID}^\prime,\\
\perp & {\rm otherwise}.
\end{cases}$$
For example, in an email system, the identity can be the email address and the payload can be the email content.
Employing a public-key cryptosystem, it is possible to encrypt a message to any user whose email address is known.
Using his identity, the receiver can obtain the corresponding secret key and decrypt all the received messages.
The concept of IBE was first thought of by Shamir in 1984 \cite{Sha84} as a way to simplify certificate management in email systems.
The first realisations of IBE schemes were given in 2001: the security of these schemes was based on either the BDH or the QR assumption in the random oracle model \cite{BF01,Coc01}.
In \cite{BF01}, the system is based on bilinear maps between groups (the Weil pairing on elliptic curves is an example of such maps), and it is IND-based secure against an adaptive chosen ciphertext attack.
In \cite{Coc01}, the system is based on quadratic residues; it is relatively easy to implement, but the ciphertext length is very large (see \cite{BGH07} for an improvement).

Since then, tremendous progress has been made towards obtaining IBE schemes that are secure in the standard model based on pairings as well as on lattices \cite{CHK03, BB04, BB04b,Wat05,CS05,Gen06,Nac07,ABB10,Wat09,CW13}.
Different assumptions are used to prove the security of the cited constructions, such as DBDH, DBDHI, truncated ABDHE, DLIN, and LWE.

Several surveys on IBE schemes have been published during the years. We refer the interested reader to \cite{Boy08,HY18,TAA19}.

\subsubsection{\bf Hierarchical IBE}\label{subsec: HIBE}
Some of the mentioned papers also present constructions of HIBE.
With Hierarchical IBE (HIBE), we refer to a generalisation of IBE that mirrors an organisational hierarchy.
Identities are taken as vectors, and there is a fifth algorithm called \texttt{Derive}.
A vector of dimension $\ell$ represents
an identity at level $\ell$ of the hierarchy.
Algorithm \texttt{Derive} takes as input an identity ${\rm ID}=(I_1,\ldots,I_\ell)$ at level $\ell$ and the private key of the parent identity ${\rm ID}^\prime=(I_1,\ldots,I_{\ell-1})$ at level $\ell-1$.
It outputs the private key for an identity $\rm ID$.
So an identity can issue private keys to its descendants' identities, but it cannot decrypt messages intended for other identities.
The notion of HIBE was first defined in \cite{HL02} and constructions in the random oracle model were given in \cite{GS02} (based on the DBDH assumption).
Several works on HIBE schemes followed, and their security is proved in the standard model, see \cite{CHK03,BB04,BBG05,ABB10,Wat09,CHKP10}.
These schemes are based on different security assumptions such as  DBDH, DBDHE, DLIN, and LWE.
For a deeper analysis of HIBE schemes, we refer the reader to \cite{DRS17}.

\subsection{Attribute-Based Encryption}\label{sec: ABE}
In many situations, it is required that encrypted sensitive data can be decrypted only by users who have certain credentials or attributes.
As an example, consider a company with different branches \{Trento, London, Oslo, etc.\} where employees are divided into management-levels $\{1,2,\ldots,5\}$ and characterised by the years spent working for the company.
A sensitive memo about a project in the Italian branch may be encrypted with the following access policies: 
(branch = Trento \texttt{AND} years $\ge3$)
 \texttt{OR} (management-level = $5$), and only employees with these attributes can read the memo.

To comprehend also these kinds of scenarios, ABE systems are introduced as an extension of IBE.
In ABE schemes, sets of attributes or access policies over attributes are used instead of the identities in the encrypted message and the secret key.
ABE integrates encryption and access control, and it is ideal for sharing secrets among groups, especially in a cloud environment.
Many different ABE schemes with additional features have been proposed during the years.

In the following, we describe some of the main features of ABE schemes.
To have a more detailed overview of ABE systems, we refer the reader to \cite{QLDJ14,LCH13,ZD+20}.

 \subsubsection{\bf Threshold-Policy Access Control}
Sets of descriptive attributes ($\omega,\omega^\prime$) label both the ciphertext and the user's secret key.
The decryption is successful if there is a sufficient overlap, say $d$, between the attribute sets of the ciphertext and the secret key, that is,
$$F(\omega^\prime,(\omega,m))=\begin{cases}
m & {\rm if}\ |\omega\cap\omega^\prime|\ge d,\\
\perp & {\rm otherwise}.
\end{cases}$$
This threshold provides error tolerance for coarse-grained access control schemes, and it is particularly useful to biometric applications where public information, not a secret, is treated as the input.
This type of ABE scheme was introduced by Sahai and Waters in 2005 \cite{SW05} called by the authors \textit{Fuzzy Identity-Based Encryption} scheme.
This scheme is based on polynomial interpolation, and it is secure under the decisional MBDH assumption.
Since the work of Sahai and Waters, several other constructions followed; for example, in \cite{Cha07} there is an extension of the Fuzzy-IBE scheme based on the DBDH assumption.

\subsubsection{\bf Key-Policy Access Control}
In Key-Policy ABE (KP-ABE), an access policy $\phi$ is encoded into a user's secret key, and ciphertext is labeled with a set of descriptive attributes $\omega$.
The access structure of the private key specifies which type of ciphertexts the key can decrypt, that is,
$$F(\phi,(\omega,m))=\begin{cases}
m & {\rm if}\ \omega\ {\rm satisfies }\ \phi,\\
\perp & {\rm otherwise}.
\end{cases}$$
In KP-ABE schemes, the data owner has limited control over who can decrypt the data. KP-ABE schemes are suitable for structured organisations with rules about who may read particular documents. Typical applications of KP-ABE include secure forensic analysis and target broadcast.
The first KP-ABE was presented by Goyal, Pandey, Sahai, and Waters in 2006 \cite{GPSW06}.
Further results on KP-ABE followed, see for example \cite{AHL+12,OT10,GVW13,BGG+14}.
KP-ABE schemes are based on different assumptions, such as
DBDH, MDHE, DLIN and LWE.

\subsubsection{\bf Ciphertext-Policy Access Control}
In Ciphertext-Policy ABE (CP-ABE), a user's secret key is associated with a list of attributes $\omega$, and a ciphertext specifies an access policy $\phi$.
A ciphertext can be decrypted by a user if and only if the user's attributes satisfy the ciphertext's access policy, that is,
$$F(\omega,(\phi,m))=\begin{cases}
m & {\rm if}\ \omega\ {\rm satisfies}\ \phi,\\
\perp & {\rm otherwise}.
\end{cases}$$
Hence, the data owner can determine who can decrypt and, if the policy needs to be updated frequently, CP-ABE can be more flexible than KP-ABE.
One of the first constructions of CP-ABE appears in \cite{BSW07}, which is based on the random oracle assumption.
Improvements on CP-ABE schemes can be found in several papers, 
for example, in \cite{Wat11,LOS+10,OT10}.
The cited schemes are based on different assumptions, such as the decisional PBDHE, SD3, DLIN.

\subsubsection{\bf Non-monotonic Access Control}
Usually, KP-ABE and CP-ABE use monotonic access structures, which means that users cannot express the negative attribute to exclude certain users from decryption.
The number of attributes could be doubled to introduce the NOT expression: for each original attribute, the system must assign one corresponding negative attribute. This approach is not efficient in the number of attributes to consider.
The first non-monotonic attribute-based scheme to reduce the number of negative attributes can be found in \cite{OSW07}.
  Another non-monotonic system is presented in \cite{AHL+12}.
 The mentioned schemes are based on the DBDH and the DBDHE assumptions.
  
\subsubsection{\bf Hierarchical ABE}\label{subsec: HABE}
In  Hierarchical ABE (HABE), as for HIBE, the delegation of access privilege is hierarchical.
This feature can be helpful when there are many attributes in the system. It relieves the authority (in charge of generating secret keys) from the heavy key management burden.
Users are associated with attribute vectors that denote their depth in the hierarchy. 
Users at a higher level may delegate secret keys to their subordinates.
Weng et al.\cite{WLW10} combined HIBE with CP-ABE to achieve HABE.
Other HABE schemes can be found in \cite{DW+14,LYZ19}.

\subsubsection{\bf Revocable ABE}
The revocation functionality is realised in revocable ABE schemes, where it is possible to revoke malicious users from the system or revoke an expired attribute.
For more details, see \cite{HN10,LC+13,GAS16,AS+19,CH+20}.

\subsubsection{\bf Other features}
Other important features characterise ABE schemes. 
\textit{Scalability} improves the dynamism of the system:
a new user with new attributes is allowed to join the system without restarting. 
\textit{Key delegation} schemes are characterised by having another algorithm in their definition: $\texttt{Delegate}({\rm sk}_k,\tilde{k})\rightarrow{\rm sk}_{\tilde{k}}$, on input a user's secret key ${\rm sk}_k$ and a set of attributes $\tilde{k}\subseteq k$, outputs a new user's secret key ${\rm sk}_{\tilde{k}}$.  Hence a user can create new secret keys for subsets of its attributes set. Both features can be achieved with a hierarchical structure.
 Another important feature is the \textit{Multi-Authority}. Multi-Authority ABE schemes are described in Subsection \ref{sec: multi-authority}.
 \textit{Accountability} helps to track misbehaving users by reducing trust assumptions on both authorities and users.
 For more details on these features, we refer the interested reader to the mentioned surveys on ABE schemes  \cite{QLDJ14,LCH13,ZD+20}.

\section{Predicate Encryption with private index}\label{chap: PE}

As briefly mentioned in Section \ref{chap: def FE}, in a PE scheme a ciphertext is associated with a set of attributes $\textbf{x}$ (called also \textit{index}),  while a decryption secret key is tied to a specific predicate $P$ in some function class $\mathcal{F}$.
The secret key ${\rm sk}_P$, sometimes called also \textit{token}, correctly decrypts a ciphertext if and only if the associated set of attributes $\textbf{x}$ satisfies $P(\textbf{x})=1$, that is,
$$F(P,(\textbf{x},m))=\begin{cases}
m & {\rm if}\ P(\textbf{x})=1,\\
\perp & {\rm otherwise}.
\end{cases}$$
 Note that both IBE and ABE can be cast in the framework of predication encryption. In particular, IBE schemes can be viewed as predicate encryption for the class of equality tests.
A ciphertext in PE hides the message (or payload) $m$:
an adversary holding keys ${\rm sk}_{P_1},\cdots,{\rm sk}_{P_\ell}$ learns nothing about a message encrypted using attribute $\textbf{x}$ only if $P_1(\textbf{x})=\ldots=P_\ell(\textbf{x})=0$.
This security notion is often referred to as \textit{payload hiding}.
 If the ciphertext hides also the attributes $\textbf{x}$, we talk about \textit{attribute hiding} or \textit{private-index PE}. Whereas, IBE and ABE are also called \textit{public-index PE}.
Some proposed constructions only hide the attributes underlying each ciphertext when the decryption is not successful. This is referred to as \textit{weakly-hiding the attributes}.
Instead, in \textit{fully-hiding} constructions, the only information that leaks is the value of the predicate evaluation, even when the decryption succeeds. In the following, we present the three main classes of private-index PE schemes.

\subsection{Anonymous Identity-Based Encryption}
With anonymous-IBE, we refer to an IBE scheme in which the ciphertext identity (the index related to the message) is hidden, and one can only determine it by using the corresponding private key (attribute hiding).
Most of the IBE schemes presented in Subsection \ref{IBE} are only payload hiding.
The IBE system proposed in \cite{BF01} is inherently anonymous, as noticed by Boyen in \cite{Boy03}.
Recall that its security proofs are set in the random oracle model.
In the full version of  \cite{BSW11}, Boneh, Sahai, and Waters show that this system \cite{BF01} is SIM-based secure (in the random oracle model).
In \cite{BW06}, we can find an identity-based cryptosystem that features fully anonymous ciphertexts and hierarchical key-delegation. The proof of security is given in the standard model, based on the  DLIN assumption in bilinear groups.
The system has small ciphertexts of size linear in-depth of the hierarchy.
Other anonymous IBE constructions, both pairing-based (truncated ABDHE, DBDH, DLIN assumptions) and lattice-based (LWE assumption), can be found in \cite{Gen06, CHKP10,ABB10, Pei09,CK+09}.
More recent studies on anonymous IBE can be found in \cite{HK+20,BBP19,FT15,HW+16,MWL18,XL+16}.

\subsection{Hidden Vector Encryption}\label{subsec:HVE}
Hidden Vector Encryption (HVE) is a particular case of Predicate Encryption (PE).
Boneh and Waters first proposed it in \cite{BW07} as a subclass of Searchable Encryption systems. 
Previous Searchable Encryption schemes were proposed in \cite{BCOP04} supporting simple equality tests.
Let $\Sigma$ be a finite set of attributes, let $*$ be a special character (not in $\Sigma$) and set $\Sigma_*=\Sigma\cup\{*\}$. We refer to $*$ as a \textit{wildcard character}.
Often $\Sigma$ corresponds to the set of binary strings of arbitrary length.
For a vector $\boldsymbol{\sigma}=(\sigma_1,\ldots,\sigma_n)\in\left(\Sigma_*\right)^n$, a predicate $P_{\boldsymbol{\sigma}}$ over $\Sigma^n$ is defined as follows: for $\textbf{x}=(x_1,\ldots,x_n)\in\Sigma^n$
$$P_{\boldsymbol{\sigma}}(\textbf{x})=\begin{cases}
1 & \mbox{if } \forall i\  \sigma_i=x_i \mbox{ or } \sigma_i=*,\\
0 & \mbox{otherwise}.
\end{cases}$$
Based on \cite{BW07}, an HVE scheme supporting equality predicates leads to an anonymous IBE scheme where a ciphertext reveals no useful information about the receiver's identity.

In \cite{BW07}, an HVE scheme is constructed 
based on composite-order bilinear groups.
It supports evaluation of conjunctive equality ($x_i=\sigma_i$), comparison ($x_i\ge\sigma_i$) and subset predicates ($x_i\in A_i\subseteq\Sigma$).
Additionally, it supports a conjunctive combination of these primitive predicates by extending the size of
ciphertexts.
The ciphertext size is $\mathcal{O}(n)$ and the token size is $\mathcal{O}({\rm weight}(\boldsymbol{\sigma}))$, where ${\rm weight}(\boldsymbol{\sigma})={\rm weight}\left(\left(\sigma_1,\ldots,\sigma_n\right)\right)=|\{\sigma_i : \sigma_i\ne*\}|$.
The scheme is proved selectively IND-based secure under the Dc3DH  assumption and the DcBDH assumption, where an adversary has restrictions on the queries it can ask.
After the introduction of HVE based on composite-order bilinear groups, several other HVE schemes
have been proposed, see e.g. \cite{IP08,SW08,Par10,PL+13,DIP12,ZYT13,SV+10}.
 The mentioned schemes are based on pairings and rely on different assumptions, such as DLIN and DBDH.

\subsection{Inner-Product Predicate Encryption}\label{sec: IP-PE}
Researchers have tried to expand the function class $\mathcal{P}$ supported by a PE scheme, with the ultimate goal being to handle all polynomial-time predicates.
A stepping stone towards this goal is to support the predicated corresponding to inner products over $\Z_N$ (for some large integer $N$).
For a set of attributes $\Sigma=(\Z_N)^n$, we consider the class of predicates $\mathcal{P}=\{P_\textbf{x}\ |\ \textbf{x}\in(\Z_N)^n\}$ with 
$$P_\textbf{x}(\textbf{y})=\begin{cases}
1 & \mbox{if } \langle\textbf{x},\textbf{y}\rangle=\sum_{i=1}^nx_i\cdot y_i \equiv 0\bmod N,\\
0 & \mbox{otherwise}.
\end{cases}$$
Notice that sometimes in literature, these schemes are called IPE (Inner Product Encryption) schemes. This can create confusion with schemes that output the value of the inner product (see Section \ref{chap: IPE}).
We then refer to a scheme supporting this class of functionalities as an Inner-Product Predicate Encryption (IP-PE) scheme.
In \cite{KSW08}, Katz, Sahai, and Waters construct an IP-PE scheme without a random oracle, based on two assumptions in composite-order groups equipped with a bilinear map.
The (decisional) assumptions D3c3DH and D3c3BDH, introduced in \cite{KSW08}, are shown to hold in the generic-group model, assuming that finding a non-trivial factorisation of the group order is hard.
The scheme is proved selective IND-based secure. 
In the same work, the authors show that any PE scheme supporting inner product predicates can be used as a building block to construct predicates of more general types.
IP-PE implies anonymous-IBE and HVE, and
it can construct a PE scheme supporting polynomial evaluation (polynomials of some bounded degree).
This allows us to build PE schemes for disjunctions and conjunctions of equality tests and Boolean variables, hence arbitrary CNF and DNF formulas (bounded depth).
IP-PE can also be used to construct an (attribute hiding) fuzzy-IBE with an exact threshold policy.
We report here some of the mentioned implications, as presented in \cite{KSW08}.
Anonymous IBE can be recovered from IP-PE in the following way. 
To generate secret keys for identity $I\in\Z_N$, set $\textbf{x}=(1,I)$ and output the secret key for the predicate $P_\textbf{x}$. To encrypt a message $M$ for identity $J\in\Z_N$, set $\textbf{y}=(-J,1)$ and encrypt the message with index $\textbf{y}$. Then decryption is successful if $\langle \textbf{x},\textbf{y}\rangle=0$, that is, if $I=J$.
HVE for predicates over $\Sigma^\ell=(\Z_N)^\ell$ can be recovered from IP-PE by doubling the dimension.
Consider the predicate $P_{\boldsymbol{\sigma}}$ described in Subsection \ref{subsec:HVE}. To generate a secret key corresponding to $P_{\boldsymbol{\sigma}}$, first construct a vector $\textbf{x}=(x_1,\ldots,x_{2\ell})$ as follows:
\begin{align*}
    {\rm when}\ \sigma_i\ne*, &\text{ set } x_{2i-1}=1,\ x_{2i}=\sigma_i,\\
    {\rm when}\ \sigma_i=*, &\text{ set } x_{2i-1}=0,\ x_{2i}=0.
\end{align*}
Then output the secret key for predicate $P_\textbf{x}$.
To encrypt a message $M$ for attribute $\textbf{z}=(z_1,\ldots,z_\ell)$, choose random $r_1,\ldots,r_\ell\in\Z_N$, construct a vector $\textbf{y}_\textbf{r}=(y_1,\ldots,y_{2\ell})$  with
$$y_{2i-1}=-r_i\cdot z_i,\ y_{2i}=r_i,$$
and output the corresponding ciphertext.
Clearly, if $P_{\boldsymbol{\sigma}}(\textbf{z})=1$ then $\langle\textbf{x},\textbf{y}_\textbf{r}\rangle=0$ and $P_\textbf{x}(\textbf{y}_\textbf{r})=1$.
On the other side, assuming $\gcd(\sigma_i-z_i,N)=1$, if $P_{\boldsymbol{\sigma}}(\textbf{z})=0$ then the probability that $\langle\textbf{x},\textbf{y}_\textbf{r}\rangle=0$ is $1/N$.
Finally, polynomial evaluation can be described with IP-PE.
Let $a\in\Z_N[x]$ be a polynomial of degree at most $d$, $a(x)=a_dx^d+\ldots+a_1x+a_0$, and consider the predicate 
$$P_a(x)=\begin{cases}
1 & {\rm if}\ a(x)=0,\\
0 & {\rm otherwise}.
\end{cases}$$
To generate a secret key corresponding to $a$, set $\textbf{x}=(a_d,\ldots,a_0)$ and output the key for predicate $P_\textbf{x}$.
To encrypt a message $M$ for the attribute $w\in\Z_N$, set $\textbf{y}=(w^d\mod N,\ldots,w^0\mod N)$ and output the corresponding ciphertext.
The decryption is then successful if $\langle\textbf{x},\textbf{y}\rangle=0$, that is, if $a(w)=0$.

The first lattice-based construction of a PE scheme for the inner product is proposed in \cite{AFV11}.
The scheme is proved secure (weakly-hiding the attributes) under the LWE assumption.
Other constructions of PE schemes supporting inner products can be found in \cite{OT09,LOS+10,Par11,OT12,Wee17}.
The above schemes are based on pairings and rely on different assumptions such as RDSP, IDSP, DLIN, and DBDH.

\section{Beyond predicate encryption: Inner Product Encryption}\label{chap: IPE}

So far, we have only discussed particular kinds of functional encryption where decryption successfully recovers the entire message if the attributes associated with the ciphertext (resp.\ the functional decryption key) satisfy the access policy embedded in the key (resp.\ the ciphertext). While this is a fruitful generalisation of traditional public-key encryption, since it permits embedding complex access policies into the encrypted data, it is still an ``all-or-nothing'' encryption. 
The \textit{Inner Product Encryption} (IPE), introduced in \cite{ABDP15}, is the first FE scheme with fine-grained access to the encrypted data, where decryption recovers partial information about the encrypted data. 

In an IPE scheme, a ciphertext $c_{\mathbf{x}}$ is related to a vector $\mathbf{x} =(x_1, \dots, x_n) \in (\mathbb{Z}_q)^n$, and a functional decryption key $s_{\mathbf{y}}$ is related to a vector $\mathbf{y} =(y_1, \dots, y_n)\in (\mathbb{Z}_q)^n$. Given a ciphertext and a secret key, the decryption algorithm computes the inner product $\left\langle \mathbf{y}, \mathbf{x} \right\rangle = \sum_{i=1}^n y_i x_i \in \mathbb{Z}_q$, that is,
\[
F({\bf y},{\bf x})= \sum_{i=1}^n y_i x_i \, \bmod q.
\]
Notice that  IPE is different from the inner-product predicate encryption presented in Subsection \ref{sec: IP-PE}, where the ciphertext of a message $m$ is related to a vector $\mathbf{x} \in \left(\mathbb{Z}_q\right)^n$, the secret key is related to a vector $\mathbf{y} \in \left(\mathbb{Z}_q\right)^n$, and the decryption algorithm reveals $m$ if and only if $\left\langle \mathbf{x}, \mathbf{y} \right\rangle \equiv 0 \bmod q$. By contrast, the output in IPE schemes is the actual value of the inner product. At first glance, this functionality might seem less powerful than the previous definitions and constructions of identity-based encryption and attribute-based encryption. Indeed, inner product encryption cannot be used to construct those kinds of cryptosystems. However, neither IBE nor ABE can be used to build IPE. So this notion is completely separated from previous works, and its relevance stands in providing only partial information on the plaintext and not ``all-or-nothing''. Moreover, IPE directly finds many natural applications in theory as well as in practice. First, it can be used directly to compute the Hamming distance between two words, and more generally, the inner product is a distance for vectors of the fixed norm. So this could be exploited, for instance, in biometric identification. IPE could also be used in machine learning, where, for instance, Support Vector Machine evaluation can be just an inner product between a test vector and the data. In this case, a secret key could classify the data concerning one criterion and learn nothing about the rest of it. It also allows making statistical analysis, like weighted means or by encrypting the cross-products between the entries. It can even be used for computations of variance and other quadratic polynomials.

In \cite{ABDP15}, Abdalla et al.\ present a direct construction of public-key IPE relying on standard assumptions, such as LWE and DDH. The construction is only proved to be secure against selective adversaries who are asked to commit to their challenges at the beginning of the security game. In \cite{agrawal2016fully}, Agrawal et al.\ provide constructions that provably achieve security against more realistic adaptive attacks, where the messages $m_0$ and $m_1$ may be adaptively chosen in the challenge phase. Later, in \cite{agrawal2020adaptive}, Agrawal et al.\ prove that the scheme in \cite{agrawal2016fully} achieves adaptive SIM-based security (AD-SIM) rather than just adaptive IND-based security. Moreover, some IPE schemes based on the DDH, DCR, and LWE assumptions prove AD-SIM security for an unbounded number of key queries and a single challenge ciphertext.

In \cite{benhamouda2017cca}, Benhamouda et al. propose a generic construction for the first IND-CCA IPE schemes based on the DDH, DCR, and any of the MDDH assumptions. In \cite{castagnos2018practical}, Castagnos et al. provide the first adaptive IND-secure IPE schemes, which allow for the evaluation of unbounded inner products modulo a prime $p$. The constructions rely on the assumption that supposes a DDH group containing a subgroup where the discrete logarithm problem is easy.

In \cite{chotard2020dynamic}, Chotard et al.\ present an IPE scheme for computing some weighted sums on aggregated encrypted data from standard assumptions in prime-order groups in the random oracle model.

Security of the mentioned IPE schemes is based on traditional number-theoretic assumptions, such as the hardness of factorisation and the discrete logarithm problem. However, the intractability of these number-theoretic problems will be extremely vulnerable due to Shor's algorithm \cite{shor1999polynomial} if the powerful quantum computers are built. In a recent work \cite{debnath2020post}, Debnath et al.\ propose a multivariate cryptography-based inner-product encryption scheme. It is the first of its kind in the context of multivariate public-key cryptosystems. The authors use the technique of identity-based signature of \cite{chen2019identity} to develop this interesting scheme. The scheme attains non-adaptive SIM-based security under the hardness of the MQ problem.

\subsection{Function-hiding inner product encryption} As reported in \cite{bishop2015function}, in many real scenarios it is important to  consider also the privacy of the computed function. Consider the following motivating example: suppose a hospital subscribes to a cloud service provider to store its patients' medical records. To protect the privacy of the data, these records are stored in an encrypted form. At a later point in time, the hospital can request the cloud to perform some analysis on the encrypted records by releasing a decryption key $\mathrm{sk}_f$ for a function $f$ of its choice. If the FE scheme in use does not guarantee any hiding of the function (which is the case for many existing FE schemes), then the key $\mathrm{sk}_f$  might reveal $f$ completely to the cloud, which is undesirable when $f$ itself contains sensitive information. This has motivated the study of function privacy in FE, see for instance \cite{shen2009predicate, BRS13a, brakerski2018function}. 

An IPE scheme is called \textit{function-hiding} if the keys and ciphertexts reveal no additional information about the related vectors beyond their inner product. The \textit{fully function-hiding} IPE achieves the most robust IND-based notion of both data and function privacy in the private-key setting in the standard model. 
We sketch here the model of full function privacy, as described in \cite{agrawal2015practical, kim2019new}.
Adversaries are allowed to interact with two \textit{left-or-right} oracles $\mathrm{KeyGen}_b(\mathrm{mk}, \cdot, \cdot)$ and $\mathrm{Enc}_b(\mathrm{mk}, \cdot , \cdot)$ for a randomly chosen $b \in \{0,1\}$, where $\mathrm{KeyGen}_b$ takes two functions $f_0$ and $f_1$ as input and it returns a functional decryption key $\mathrm{sk}_{f_b}=\texttt{KeyGen}({\rm mk},f_b)$. The algorithm $\mathrm{Enc}_b$ takes two messages $x_0$ and $x_1$ as input and it outputs a ciphertext $\mathrm{c}_{x_b} =\texttt{Enc}({\rm mk},x_b)$. Adversaries can adaptively interact with oracles for any polynomial (a priori unbounded) number of queries. To exclude inherently inevitable attacks, there is a condition for adversarial queries that all pairs $(x_0, x_1)$ and $(f_0, f_1)$ must satisfy $f_0(x_0) = f_1(x_1)$.

Only two approaches have been proposed for (fully) function-private IPE schemes in the private-key setting. One is to employ the Brakerski-Segev general transformation from (non-function-private) FE schemes for general circuits \cite{brakerski2018function}. The transformation itself is efficient since it simply combines symmetric key encryption with FE in a natural manner. Anyway, this approach requires computationally intensive cryptography tools, such as IND obfuscation\footnote{In simple words, obfuscation aims to make a computer program ``unintelligible'' while preserving its functionality, and indistinguishability obfuscation requires that, given any two equivalent circuits of similar size, their obfuscation should be computationally indistinguishable.}, to realise non-function-private FE for general circuits, meaning it may be relatively inefficient overall. The other approach may be more practical. It directly constructs IPE schemes by using the dual-pairing vector spaces (DPVS) introduced by Okamoto and Takashima \cite{okamoto2008homomorphic,OT09}. 

In the last few years, there has been a flurry of works on the construction of function-hiding IPE, starting with the work of Bishop, Jain, and Kowalczyk \cite{bishop2015function}. They propose a function-hiding IPE scheme under the SXDH assumption, which satisfies an adaptive IND-based security definition. But the security model has one limitation: all ciphertext queries $\mathbf{x}_0, \mathbf{x}_1$ and all secret key queries $\mathbf{y}_0, \mathbf{y}_1$ are restrained by $\left\langle \mathbf{x}_0, \mathbf{y}_0\right\rangle = \left\langle \mathbf{x}_0, \mathbf{y}_1\right\rangle = \left\langle \mathbf{x}_1, \mathbf{y}_0\right\rangle =\left\langle \mathbf{x}_1, \mathbf{y}_1\right\rangle$. In \cite{datta2017strongly}, Datta et al.\ develop a full function-hiding IPE scheme built in the setting of asymmetric bilinear pairing groups of prime order. The security of the scheme is based on the well-studied SXDH assumption where the restriction on adversaries' queries is only $\left\langle \mathbf{x}_0, \mathbf{y}_0\right\rangle = \left\langle \mathbf{x}_1, \mathbf{y}_1\right\rangle$. Here, secret keys and ciphertexts of $n$-dimensional vectors consist of $4n +8$ group elements. Tomida et al., in \cite{tomida2016efficient}, construct a more efficient function-hiding IPE scheme than that of \cite{datta2017strongly} under the XDLIN assumption, where secret keys and ciphertexts consist of $2n +5$ group elements. Kim et al., in \cite{kim2018function}, put forth a fully-secure function-hiding IPE scheme with smaller parameter sizes and run-time complexity than those in \cite{bishop2015function, datta2017strongly}. The scheme is proved SIM-based secure in the generic model of bilinear maps. In \cite{zhao2018simulation}, Zhao et al.\ present the first SIM-based secure secret-key IPE scheme under the SXDH assumption in the standard model. The authors claim that the scheme can tolerate an unbounded number of ciphertext queries and adaptive key queries. 
Zhao et al. in \cite{zhao2018improved} propose a new version of the scheme, which is an improvement in terms of computational and storage complexity.
In a very recent work \cite{liu2021efficient}, Liu et al.\ present a more efficient and flexible private-key IPE scheme with SIM-based security. To ensure correctness, the scheme requires that the computation of inner products is within a polynomial range, where the discrete logarithm of $g^{\langle \mathbf{x}, \mathbf{y} \rangle}$ can be found in polynomial time. In \cite[Table 2]{liu2021efficient}, the authors compare their proposed IPE scheme with those in \cite{tomida2016efficient,datta2017strongly,zhao2018simulation, zhao2018improved}. The performance of this scheme appears superior in both storage complexity and computation complexity. Moreover,  secret keys and ciphertexts are shorter. 

Although most aforementioned IPE schemes are efficient and based on standard assumptions, they all have one inconvenient property: they are \textit{bounded}. The maximum length of vectors has to be fixed at the beginning, and afterward, one cannot handle vectors whose lengths exceed it. This could be inconvenient when it is hard to predict which data will be encrypted in the setup phase. One may think to solve the problem by setting the maximum length to a  large value. However, the size of parameters expands at least linearly with the fixed maximum length, and such a solution incurs an unnecessary efficiency loss.  In the context of IP-PE and ABE, there exist unbounded schemes (see, for instance, \cite{brakerski2016circuit, chen2018unbounded, lewko2011unbounded, okamoto2012fully}), whose public parameters do not impose a limit on the maximum length of vectors or number of attributes used in the scheme.  In \cite{tomida2020unbounded},  Tomida and Takashima construct two concrete unbounded IPE schemes based on the standard SXDH assumption, both secure in the standard model: the first is a private-key IPE with fully function hiding, the second scheme is a public-key IPE with adaptive security. Concurrently and independently, in \cite{dufour2019unbounded}, Dufour-Sans and Pointcheval describe an unbounded IPE system supporting identity access control with succinct keys. Their construction is proved selectively IND-secure in the random oracle model based on the standard DBDH assumption.
In \cite[Table 1]{tomida2020unbounded}, it is shown a comparison, in terms of efficiency, among private-key schemes that are fully function hiding \cite{datta2016functional, tomida2016efficient, kim2019new} and public-key schemes with adaptive security in the standard model \cite{agrawal2016fully}.

\subsection{Orthogonality FE} A variant of IPE is the so-called \textit{Orthogonality FE} (OFE). In an OFE scheme, a ciphertext $c_{\mathbf{x}}$ is related to a vector $\mathbf{x} \in (\mathbb{Z}_q)^n$, and a functional decryption key $s_{\mathbf{y}}$ is related to a vector $\mathbf{y} \in (\mathbb{Z}_q)^n$. Given $c_{\mathbf{x}}$ and $s_{\mathbf{y}}$, the decryption algorithm determines if $\mathbf{x}$ is orthogonal to $\mathbf{y}$ or not. That is, it returns 1 if $\langle \mathbf{x}, \mathbf{y} \rangle = 0$, and returns 0 otherwise. 
At first glance, OFE appears as a simple modification of IPE, but the latter reveals more information on $\mathbf{x}$ and $\mathbf{y}$ than OFE does. 
Most of the existing OFE schemes are instantiated in (three-factors) composite-order bilinear groups \cite{BW07,KSW08} or DPVS on
prime-order bilinear groups \cite{OT12,okamoto2013efficient}. All of these schemes share a high level of conceptual complexity. In \cite{Wee17}, Wee presents a family of simple OFE schemes in prime-order bilinear groups under the MDDH assumption. In \cite{barbosa2019efficient}, Barbosa et al. extend the Wee's work to the context of function-hiding.

\section{Other FE schemes}\label{chap: other}

So far, we have mainly described schemes for predicate encryption and inner product encryption in their basic form.
In this section, we want to mention other schemes: FE schemes that go beyond predicates and IPE and FE schemes beyond basic features.

\subsection{FE for bounded collusions}\label{sec: bounded collusions}
When the security of an FE scheme is guaranteed only when a constant number of functional decryption keys are corrupted, we talk about {\em security for bounded collusions}.
The case considered in general is the {\em unbounded collusions} one.
Sahai and Seyalioglu \cite{SS10}  build the first FE scheme for all circuits where security handles the corruption of one functional decryption key.
The scheme uses garbles circuits and public-key encryption, and the ciphertext size depends on the size of the circuit associated with the functional decryption keys.
Improvements are presented in \cite{GKP+13} in which the ciphertext size depends only on the size of the output of the function for which the functional decryption keys are generated. This scheme uses ABE for all circuits and FHE, both admitting constructions from standard assumptions.
The works in \cite{GVW12,Agr17} show how to generically turn any FE scheme secure only when one functional decryption key is corrupted into an FE scheme where security handles an (a priori bounded) polynomial number of collusions.

\subsection{FE schemes for quadratic functions}
With the FE scheme for quadratic functions, we intend to allow bilinear  maps computing over the integers.
In particular, messages are expressed as pairs of vectors $(\textbf{x},\textbf{y})\in\Z^n\times\Z^m$ and secret keys are associated with matrices $\textbf{A}=(a_{ij}) \in \Z^{n \times m}$.
The decryption allows to compute $$\textbf{x}^\top\textbf{A}\textbf{y}=\sum_{i,j}a_{ij}x_iy_j.$$
Bilinear maps have several practical applications: for instance, a quadratic polynomial can express different statistical functions (e.g., mean, variance, covariance), the Euclidean distance between two vectors, and the application of a linear or quadratic classifier. FE for an inner product can be used as the building block to obtain FE for quadratic functions. This fact, implicit in \cite{BCFG17}, is made explicit in \cite{gay2020new} and in the private-key variants \cite{lin2017indistinguishability, ananth2017projective}. All the schemes in the works mentioned above have been constructed from standard assumptions on pairings.

In particular, in \cite{BCFG17} Baltico et al.\ present two efficient FE schemes for quadratic functions with linear-size ciphertexts in the public-key setting. One scheme is proved selective IND-secure under standard assumptions (MDDH and 3PDDH), 
the other is proved adaptive IND-secure in the generic-group model. Moreover, in \cite[Table 1]{BCFG17}, its comparison is presented with different schemes for quadratic functions (some of which are in the private-key setting).
Further results have been obtained in \cite{Wee20}, where Wee presents a scheme with constant-size keys and short ciphertexts (selective SIM-based secure on standard assumptions).

\subsection{FE schemes for circuits}
We now consider functionalities of a more general form: arbitrary circuits.
Some results have been obtained for PE schemes for circuits.
In \cite{GVW15}, Gorbunov et al.\ present a leveled PE scheme for all circuits (Boolean predicates of bounded depth), with succinct ciphertexts and secret keys independent of the size of the circuit.
The achieved privacy notion is a selective SIM-based variant of attribute-hiding, assuming the hardness of the subexponential LWE problem.
Recall that the strong variant notion (full attribute-hiding) is impossible to realise for many messages \cite{BSW11,AGVW13}.

Some results have also been obtained for general-purpose FE, in which functions associated with secret keys can be any arbitrary circuits.
In Subsection \ref{sec: bounded collusions} some results for bounded collusions are presented.
In \cite{GGH+16}, Garg et al.\ give constructions for IND obfuscation (based on multilinear maps), and they use it to construct FE for all polynomial-size circuits.
In \cite{BLR+15,GGHZ16}, there are constructions directly based on multilinear maps.
Further analysis can be found in \cite{LT17} where Lin and Tessaro reduce the degree of the required multilinear map to 3.

\subsection{Multi-Input FE} \textit{Multi-Input Functional Encryption} (MI-FE) is a generalisation of functional encryption to the setting of multi-input functions. The functions, evaluated on encrypted information, can be assumed to take multiple inputs, and each input corresponding to a different ciphertext. That is,  the scheme has several encryption slots and each secret key $\mathrm{sk}_f$ for a multi-input function $f$ decrypts jointly ciphertexts $\texttt{Enc}(x_1), \dots, \texttt{Enc}(x_n)$ for all slots to obtain $f(x_1,\dots, x_n)$ without revealing anything more about the encrypted messages. The MI-FE functionality provides the capability to independently encrypt messages for different slots. This concept was introduced by Goldwasser et al.\ in \cite{GGG+14}.  

Abdalla et al.\ in \cite{abdalla2017multi} construct the first scheme of Multi-Input Inner Product Encryption (MI-IPE)  which achieves message privacy, and Datta et al., in \cite{datta2018full}, propose a new scheme which they call \textit{unbounded private-key} MI-IPE. It achieves function-hiding privacy; meanwhile, they enable the encryption of ciphertexts and the generation of secret keys for unbounded vectors. While the inner product functionality is helpful for several meaningful applications (we refer the reader to \cite{abdalla2017multi} for a discussion), it is evidently desirable, from the viewpoint of both theory and practice, to extend the reach of MI-FE from standard assumptions beyond inner products. In \cite{abdalla2018multi}, Abdalla et al. put forward a novel methodology to convert single-input FE for inner
products into multi-input schemes for the same functionality. The resulting scheme does not require pairings, it can be instantiated with all
known single-input schemes and it relies on different assumptions, such as DDH, LWE, and DCR.
In a very recent preprint, \cite{cryptoeprint:2020:1285}, Agrawal et al.\ present the first MI-FE scheme for quadratic functions from pairings. All previous MI-FE schemes either support only inner products (linear functions) or rely on solid cryptographic assumptions such as indistinguishability obfuscation or multilinear maps. The scheme showed in \cite{cryptoeprint:2020:1285} would achieve (selective) IND-based security against unbounded collusions under the standard BMDDH assumption. Thus, their construction would accomplish the same level of protection as single input quadratic FE.

In \cite{ACF+20}, the notion of \textit{Ad Hoc MI-FE} (aMI-FE) was introduced to handle the key escrow problem of MI-FE. aMI-FE does not differentiate between key authorities and users, and it lets users generate their own partial decryption keys along with ciphertexts. In the same work, Agrawal et al.\ provide construction of aMI-FE for inner product based on the LWE assumption.

Recently, Agrawal, Goyal, and Tomida in \cite{AGT20} define
\textit{Multi-Party Functional Encryption} (MP-FE), which extends and abstracts various notions of functional encryption such as multi-input, multi-client, multi-authority, and dynamic decentralised that we present in the following sections. 
MP-FE allows for distributed ciphertexts and distributed keys and specifies how these may be combined for function evaluation. It comprehends both key-policy and ciphertext-policy schemes, and it captures both attribute and function hiding. It also supports schemes with interactive, independent, or centralised setup.
The authors individuate several combinations of features not studied in literature and provide constructions for some of them with their general framework.

\subsection{Multi-Client FE}\label{sec: Multi client}
\textit{Multi-Client Functional Encryption} (MC-FE), as defined in \cite{GGG+14}, generalises the definition of functional encryption in the following way. The single input $x\in X$ to the encryption procedure is divided into $n$ independent components $(x_1,\ldots,x_n)$.
Each client $C_i$ does the following encryption  $\texttt{Enc}({\rm pk},{\rm ek}_i,x_i,\ell)$, where $\ell$ is a label, usually time-dependent, and ${\rm ek}_i$ is an encryption key specific for the client $C_i$.
A user owning a secret key ${\rm sk}_f$ for an $n$-ary function $f$ and $n$ ciphertexts  $\texttt{Enc}({\rm pk},{\rm ek}_1,x_1,\ell)$, \ldots, $\texttt{Enc}({\rm pk},{\rm ek}_n,x_n,\ell)$ can compute $f(x_1,\ldots,x_n)$ but he will learn nothing else about each $x_i$.
The difference with Multi-Input FE is that the combination of ciphertexts generated for different labels (hence at other times) does not give a valid global ciphertext, and the adversary learns nothing from it. In MI-FE, every ciphertext from every slot can be combined with any other ciphertext from any different slot.
The first MC-FE for an inner product in the standard model (under LWE assumption) can be found in \cite{LT19}. Other MC-FE schemes can be found in \cite{CSW20,ABM+20}.

\subsection{Multi-Authority and Decentralised FE}\label{sec: multi-authority} 
A drawback of standard FE is that a single party, the authority, is responsible for creating the functional decryption keys for all users in the system. As a direct consequence, this authority can decrypt all messages since the authority has to create every possible decryption key. Thus, relying on a single authority has consequences for the scalability of the system and the trust relations. In practical situations, we would instead appoint multiple authorities, where each authority is responsible for issuing keys but cannot decrypt messages unless it cooperates with (all) other authorities. The question of whether it is possible to construct such a \textit{multi-authority scheme} was first raised by Sahai and Waters \cite{SW05}. Whereas in \cite{lewko2011decentralizing}, Lewko and Waters address both trust and scalability issues rather: no authority is required to hold a master key, and new authorities can be added to the system without requiring any form of interaction. The resulting scheme is so-called a \textit{decentralised} FE scheme.

\subsubsection{\bf Multi-Authority FE} In a \textit{Multi-Authority Predicate Encryption} (MA-PE) system, ciphertexts are associated with one or more predicates from various authorities.  
Only users with keys satisfying all these predicates can decrypt the ciphertext.
The first proposed MA-PE schemes either require interaction between all authorities \cite{Cha07,muller2008distributed}, or do not allow for the addition of new authorities once the system is set up \cite{CC09}. They are all multi-authority attribute-based encryption (MA-ABE), and they are based on the DBDH assumption. Further results on MA-ABE scheme can be found in \cite{LMS15,LMS16,li2020decentralized}.  
The \textit{Controlled Functional Encryption} (C-FE), introduced by Naveed
et al.\ \cite{naveed2014controlled}, is a special flavor of FE.
In C-FE, the decryption algorithm becomes a two-party protocol between the party holding the function key and the authority, where only the former learns the output of the computation. In a very recent work \cite{ambrona2021controlled}, Ambrona et al.\ extend  C-FE to a multi-authority C-FE by distributing the role of (and the trust in) the authority across several parties.

\subsubsection{\bf Decentralised FE}

As above-mentioned, the first FE scheme that does not require a trusted authority is due to Lewko and Waters, \cite{lewko2011decentralizing}. It is a decentralised MA-ABE, where each authority works completely separately, and failures or disruptions for some authorities will not affect other authorities. This makes the system stronger than the majority of MA-ABE schemes. Apart from the initial set of public parameters built by all authorities, authorities no longer need to establish trustable relationships with each other in advance. 
Other decentralised MA-ABE can be found in \cite{lin2010secure, CC09, rahulamathavan2015user,yang2018improving}. In all these works,  collusion resistance is obtained using a \textit{Global Identifier} (GID), but this can breach the user privacy. In \cite{yang2018improving}, this issue is studied: unique user identifiers are obtained by combining a user’s identity with the identity of the (local) attribute authority (AA) where the user is located. The user identities remain private to the AA outside the domain, which enhances privacy and security. The scheme is based on composite-order bilinear groups and some variants of SD3, whereas security proof makes use of the dual system encryption methodology (for more details, see \cite{Wat09} where this methodology is first introduced).

In \cite{van2020multi}, van de Kamp et al.\ propose a generic framework for creating decentralised multi-authority predicate encryption. The framework supports several predicate types, such as MA-IBE, MA-ABE, and MA-IP-PE. In the random oracle model, the resulting encryption schemes are proved fully secure under standard subgroup decision assumptions, formulated for a bilinear group of orders a product of 3 primes.

The definition of MC-FE stated in Subsection \ref{sec: Multi client} assumes the existence of a trusted third party who runs the \texttt{Setup} algorithm and distributes the functional decryption keys. As introduced in \cite{CDG+18},
in a Decentralised MC-FE (DMC-FE), the authority is removed, and the clients work together to generate appropriate functional decryption keys. In \cite{CDG+18}, Chotard et al.\ give the first DMC-FE from standard assumptions for inner products. Security is proved using bilinear pairing groups, and the scheme handles the corruption of input slots. In \cite{LT19}, Libert and {\c{T}}i{\c{t}}iu present the first decentralised MC-FE scheme based on LWE assumptions. In \cite{abdalla2019decentralizing}, Abdalla et al.\ provide a generic compiler from any MC-FE scheme satisfying an extra property, called \textit{special key derivation}\footnote{In a MI-FE or MC-FE scheme with the \textit{special key derivation} property the master secret key can be split into separate secret keys, one for each input, i.e. $\rm mk = \{{\rm mk}_i\}_{i=1,\dots,n}$, and functional decryption keys ${\rm sk}_f$ are derived through a combination of local and linear inner-product computations on $i, f, {\rm sk}_i$ and $\rm pp$.}, into a DMC-FE scheme. The transformation is purely information-theoretic, and it does not require any additional assumptions. As the MC-FE from Chotard et al. \cite{CDG+18} satisfies this extra property, the resulting scheme is a DMC-FE scheme secure under the basic DDH assumption without pairings (in the random oracle model). As in \cite{CDG+18}, the version of the scheme without labels is secure in the standard model.

In \cite{chotard2020dynamic}, Chotard et al.\ introduce the \textit{Dynamic Decentralised Functional Encryption} (DD-FE), a generalisation of FE which allows multiple users to join the system dynamically, without relying on a trusted third party or expensive and interactive Multi-Party Computation protocols. This notion subsumes existing multi-user extensions of FE, such as Multi-Input, Multi-Client, and Multi-Party FE.

\subsection{Hierarchical FE} Motivated by the applicability of functional encryption to expressive access control systems, the \textit{Hierarchical Functional Encryption} has been introduced by Ananth et al.\ in \cite{ananth2013differing}. In a hierarchical FE scheme, the holder of any functional decryption key ${\rm sk}_f$ for a function $f$ can in turn generate a functional decryption key ${\rm sk}_{g \circ f}$ corresponding to the function $g \circ f$ for any given function $g$. Now,  anyone holding the delegated functional key ${\rm sk}_{g \circ f}$ and an encryption of any message $x$ can compute $g(f(x))$, but cannot learn any additional information about the message $x$ and $f(x)$. Such expressive delegation capabilities give rise to hierarchical access-control, a sought-after ingredient in modern access control systems. In particular, the notion of hierarchical FE generalises those of HIBE and HABE (see Paragraph \ref{subsec: HIBE} and Paragraph \ref{subsec: HABE}). In \cite{brakerski2017hierarchical}, Brakerski et al.\ present a generic transformation that converts any general-purpose public-key FE scheme into a hierarchical one without relying on any additional assumptions. This construction yields various hierarchical schemes based on various assumptions in the standard model, such as multilinear maps and the LWE assumption.

\subsection{Traceable FE} The conventional definition of FE associates each function to a secret functional decryption key, and therefore, all users get the same private key for the same function. This induces an important problem: if one of these users (called a \textit{traitor}) leaks or sells a  functional decryption key to be included in a pirate decryption tool, then there is no way to trace back its identity.
\textit{Traitor tracing} is a mechanism enabling an authority or an arbitrary party (a delegated party in the system with this specific task) to identify malicious users who possibly colluded.  
Several IBE \cite{abdalla2007identity,goyal2007reducing,au2008traceable,phan2011identity} and ABE \cite{ning2014large,liu2013blackbox,liu2015practical,liu2015traceable} schemes take into consideration the traceability. Indeed, achieving traceability is usually very expensive. Adding traitor tracing to public-key encryption requires a very high extra cost: even in the bounded model, the cost grows proportionally with the number of traitors. In \cite{do2020traceable}, a new primitive, called \textit{Traceable Functional Encryption}, is defined. 
The functional decryption key will not only be specific to a function but to a user too, in such a way that if some users collude to produce a pirate decoder that successfully evaluates a function on the plaintext, from the ciphertext only, one can trace back at least one of them. In \cite{do2020traceable}, Do et al.\ propose a traceable IP-FE with black-box confirmation, using the pairing.

\subsection{Verifiable FE}
In the standard setting of FE, authority and the encryptors are supposed to run their respective algorithms faithfully. In \cite{tang2010verifiable}, Tang and Ji introduce the notion of \textit{verifiable ABE} (VABE), where it is checked whether the functional decryption keys have been generated correctly. The core idea of the method is using some trick to change secret sharing \cite{shamir1979share} in the ABE schemes with verifiable secret sharing \cite{tang2004non}. In \cite{badrinarayanan2016verifiable}, Badrinarayanan et al.\ extend the VABE to the concept of \textit{verifiable FE} (VFE), which essentially guarantees that dishonest encryptors and authorities, even when colluding together, are not able to generate ciphertexts and functional decryption keys that give “inconsistent” results.  But, the solution proposed in \cite{badrinarayanan2016verifiable} is not applicable for the IPE scheme since it requires a perfectly correct IPE scheme. For this reason, in \cite{soroush2020verifiable}, Soroush et al.\ improve (in terms of efficiency) the construction in \cite{badrinarayanan2016verifiable} and they construct the first
efficient perfectly correct IPE scheme, whose security relies on the DLIN assumption.

\section{Conclusions}\label{chap: concl}

In this paper, we summarise the state-of-the-art for Functional Encryption schemes.
We provide an overview of the main mathematical assumptions used for proving the security of FE schemes and explicitly exhibit some connections between these assumptions.
We also give an overview of the used security models.
For IBE, ABE, Predicate Encryption with private index, and Inner Product Encryption, we describe the basic structure of the schemes and present their main characteristics.
For each class of FE schemes, we provide several references to the reader, focusing on the works that first offered a practical construction and the most recent constructions.
Finally, we present schemes that go beyond the inner product and schemes with enhanced features.
The latter comprehend schemes useful in real-life scenarios.

In this survey, we give great relevance to the mathematical assumptions which many FE schemes rely upon. We observe that there are assumptions not quantum-resistant or hardly realisable at the basis of many constructions. The latter means that there are few explicit examples of algebraic structures which satisfy the given assumption.  
For this reason, in our opinion, more work should be done to obtain solutions which are safer and more practical.

Even though FE schemes can be helpful for different applications and different real scenarios, as described in Section \ref{chap: IPE} and \ref{chap: other}, to our knowledge, the only practical implementations of FE schemes belong to the class of IBE and ABE.
These classes are undoubtedly the most studied because they are chronologically the first to appear and because many works focus on efficient constructions of these schemes.
For the other classes, to our knowledge, we can only find some theoretical analyses of the systems, but actual implementations are still missing.
Indeed, in scenarios where there is the need to learn a function of the encrypted data, FHE or similar schemes are so far preferred.

Nevertheless, since FE schemes have great potential, we believe that their research will still be flourishing in the following years.

\section*{Acknowledgment}
This survey has been partially presented at the CifrisCloud seminar series on March 2021 \cite{DeCifris}.
We would like to thank Andrea Gelpi, Marco Pedicini, and Michela Iezzi for their interest and valuable discussion on the subject.
We also thank the anonymous referee for the helpful comments.

\bibliographystyle{acm}
\bibliography{References,References2}

\begin{thebibliography}{100}

\bibitem{abdalla2019decentralizing}
{\sc Abdalla, M., Benhamouda, F., Kohlweiss, M., and Waldner, H.}
\newblock Decentralizing inner-product functional encryption.
\newblock In {\em IACR International Workshop on Public Key Cryptography\/}
  (2019), Springer, pp.~128--157.

\bibitem{ABDP15}
{\sc Abdalla, M., Bourse, F., De~Caro, A., and Pointcheval, D.}
\newblock Simple functional encryption schemes for inner products.
\newblock In {\em IACR International Workshop on Public Key Cryptography\/}
  (2015), Springer, pp.~733--751.

\bibitem{ABM+20}
{\sc Abdalla, M., Bourse, F., Marival, H., Pointcheval, D., Soleimanian, A.,
  and Waldner, H.}
\newblock Multi-client inner-product functional encryption in the random-oracle
  model.
\newblock In {\em International Conference on Security and Cryptography for
  Networks\/} (2020), Springer, pp.~525--545.

\bibitem{abdalla2018multi}
{\sc Abdalla, M., Catalano, D., Fiore, D., Gay, R., and Ursu, B.}
\newblock Multi-input functional encryption for inner products:
  {Function}-hiding realizations and constructions without pairings.
\newblock In {\em Annual International Cryptology Conference\/} (2018),
  Springer, pp.~597--627.

\bibitem{abdalla2007identity}
{\sc Abdalla, M., Dent, A.~W., Malone-Lee, J., Neven, G., Phan, D.~H., and
  Smart, N.~P.}
\newblock Identity-based traitor tracing.
\newblock In {\em International Workshop on Public Key Cryptography\/} (2007),
  Springer, pp.~361--376.

\bibitem{abdalla2017multi}
{\sc Abdalla, M., Gay, R., Raykova, M., and Wee, H.}
\newblock Multi-input inner-product functional encryption from pairings.
\newblock In {\em Annual International Conference on the Theory and
  Applications of Cryptographic Techniques\/} (2017), Springer, pp.~601--626.

\bibitem{Agr17}
{\sc Agrawal, S.}
\newblock Stronger security for reusable garbled circuits, general definitions
  and attacks.
\newblock In {\em Annual International Cryptology Conference\/} (2017),
  Springer, pp.~3--35.

\bibitem{agrawal2015practical}
{\sc Agrawal, S., Agrawal, S., Badrinarayanan, S., Kumarasubramanian, A.,
  Prabhakaran, M., and Sahai, A.}
\newblock On the practical security of inner product functional encryption.
\newblock In {\em IACR International Workshop on Public Key Cryptography\/}
  (2015), Springer, pp.~777--798.

\bibitem{ABB10}
{\sc Agrawal, S., Boneh, D., and Boyen, X.}
\newblock Efficient lattice {(H) IBE} in the standard model.
\newblock In {\em Annual International Conference on the Theory and
  Applications of Cryptographic Techniques\/} (2010), Springer, pp.~553--572.

\bibitem{ACF+20}
{\sc Agrawal, S., Clear, M., Frieder, O., Garg, S., O'Neill, A., and Thaler,
  J.}
\newblock Ad hoc multi-input functional encryption.
\newblock In {\em 11th Innovations in Theoretical Computer Science Conference
  (ITCS 2020)\/} (2020), Schloss Dagstuhl-Leibniz-Zentrum f{\"u}r Informatik.

\bibitem{AFV11}
{\sc Agrawal, S., Freeman, D.~M., and Vaikuntanathan, V.}
\newblock Functional encryption for inner product predicates from learning with
  errors.
\newblock In {\em International Conference on the Theory and Application of
  Cryptology and Information Security\/} (2011), Springer, pp.~21--40.

\bibitem{AGVW13}
{\sc Agrawal, S., Gorbunov, S., Vaikuntanathan, V., and Wee, H.}
\newblock Functional encryption: {New} perspectives and lower bounds.
\newblock In {\em Annual Cryptology Conference\/} (2013), Springer,
  pp.~500--518.

\bibitem{cryptoeprint:2020:1285}
{\sc Agrawal, S., Goyal, R., and Tomida, J.}
\newblock Multi-input quadratic functional encryption from pairings.
\newblock Cryptology ePrint Archive, Report 2020/1285, 2020.
\newblock \url{https://eprint.iacr.org/2020/1285}.

\bibitem{AGT20}
{\sc Agrawal, S., Goyal, R., and Tomida, J.}
\newblock Multi-party functional encryption.
\newblock Cryptology ePrint Archive, Report 2020/1266, 2020.
\newblock \url{https://eprint.iacr.org/2020/1266}.

\bibitem{agrawal2020adaptive}
{\sc Agrawal, S., Libert, B., Maitra, M., and Titiu, R.}
\newblock Adaptive simulation security for inner product functional encryption.
\newblock In {\em IACR International Conference on Public-Key Cryptography\/}
  (2020), Springer, pp.~34--64.

\bibitem{agrawal2016fully}
{\sc Agrawal, S., Libert, B., and Stehl{\'e}, D.}
\newblock Fully secure functional encryption for inner products, from standard
  assumptions.
\newblock In {\em Annual International Cryptology Conference\/} (2016),
  Springer, pp.~333--362.

\bibitem{AS+19}
{\sc Al-Dahhan, R.~R., Shi, Q., Lee, G.~M., and Kifayat, K.}
\newblock Survey on revocation in ciphertext-policy attribute-based encryption.
\newblock {\em Sensors 19}, 7 (2019), 1695.

\bibitem{alwen2013relationship}
{\sc Alwen, J., Barbosa, M., Farshim, P., Gennaro, R., Gordon, S.~D., Tessaro,
  S., and Wilson, D.~A.}
\newblock On the relationship between functional encryption, obfuscation, and
  fully homomorphic encryption.
\newblock In {\em IMA International Conference on Cryptography and Coding\/}
  (2013), Springer, pp.~65--84.

\bibitem{ambrona2021controlled}
{\sc Ambrona, M., Fiore, D., and Soriente, C.}
\newblock Controlled functional encryption revisited: {Multi}-authority
  extensions and efficient schemes for quadratic functions.
\newblock {\em Proceedings on Privacy Enhancing Technologies 2021}, 1 (2021),
  21--42.

\bibitem{ananth2013differing}
{\sc Ananth, P., Boneh, D., Garg, S., Sahai, A., and Zhandry, M.}
\newblock Differing-inputs obfuscation and applications.
\newblock Cryptology ePrint Archive, Report 2013/689, 2013.
\newblock \url{https://eprint.iacr.org/2013/689}.

\bibitem{ABSV15}
{\sc Ananth, P., Brakerski, Z., Segev, G., and Vaikuntanathan, V.}
\newblock From selective to adaptive security in functional encryption.
\newblock In {\em Annual Cryptology Conference\/} (2015), Springer,
  pp.~657--677.

\bibitem{ananth2017projective}
{\sc Ananth, P., and Sahai, A.}
\newblock Projective arithmetic functional encryption and indistinguishability
  obfuscation from degree-5 multilinear maps.
\newblock In {\em Annual International Conference on the Theory and
  Applications of Cryptographic Techniques\/} (2017), Springer, pp.~152--181.

\bibitem{AHL+12}
{\sc Attrapadung, N., Herranz, J., Laguillaumie, F., Libert, B., De~Panafieu,
  E., and R{\`a}fols, C.}
\newblock Attribute-based encryption schemes with constant-size ciphertexts.
\newblock {\em Theoretical computer science 422\/} (2012), 15--38.

\bibitem{au2008traceable}
{\sc Au, M.~H., Huang, Q., Liu, J.~K., Susilo, W., Wong, D.~S., and Yang, G.}
\newblock Traceable and retrievable identity-based encryption.
\newblock In {\em International Conference on Applied Cryptography and Network
  Security\/} (2008), Springer, pp.~94--110.

\bibitem{badrinarayanan2016verifiable}
{\sc Badrinarayanan, S., Goyal, V., Jain, A., and Sahai, A.}
\newblock Verifiable functional encryption.
\newblock In {\em International Conference on the Theory and Application of
  Cryptology and Information Security\/} (2016), Springer, pp.~557--587.

\bibitem{BCFG17}
{\sc Baltico, C. E.~Z., Catalano, D., Fiore, D., and Gay, R.}
\newblock Practical functional encryption for quadratic functions with
  applications to predicate encryption.
\newblock In {\em Annual International Cryptology Conference\/} (2017),
  Springer, pp.~67--98.

\bibitem{barbosa2019efficient}
{\sc Barbosa, M., Catalano, D., Soleimanian, A., and Warinschi, B.}
\newblock Efficient function-hiding functional encryption: {From}
  inner-products to orthogonality.
\newblock In {\em Cryptographers’ Track at the RSA Conference\/} (2019),
  Springer, pp.~127--148.

\bibitem{BC+19}
{\sc Bartusek, J., Carmer, B., Jain, A., Jin, Z., Lepoint, T., Ma, F., Malkin,
  T., Malozemoff, A.~J., and Raykova, M.}
\newblock Public-key function-private hidden vector encryption (and more).
\newblock In {\em International Conference on the Theory and Application of
  Cryptology and Information Security\/} (2019), Springer, pp.~489--519.

\bibitem{BR93}
{\sc Bellare, M., and Rogaway, P.}
\newblock Random oracles are practical: {A} paradigm for designing efficient
  protocols.
\newblock In {\em Proceedings of the 1st ACM Conference on Computer and
  Communications Security\/} (1993), pp.~62--73.

\bibitem{benhamouda2017cca}
{\sc Benhamouda, F., Bourse, F., and Lipmaa, H.}
\newblock {CCA}-secure inner-product functional encryption from projective hash
  functions.
\newblock In {\em IACR International Workshop on Public Key Cryptography\/}
  (2017), Springer, pp.~36--66.

\bibitem{BSW07}
{\sc Bethencourt, J., Sahai, A., and Waters, B.}
\newblock Ciphertext-policy attribute-based encryption.
\newblock In {\em 2007 IEEE symposium on security and privacy (SP'07)\/}
  (2007), IEEE, pp.~321--334.

\bibitem{bishop2015function}
{\sc Bishop, A., Jain, A., and Kowalczyk, L.}
\newblock Function-hiding inner product encryption.
\newblock In {\em International Conference on the Theory and Application of
  Cryptology and Information Security\/} (2015), Springer, pp.~470--491.

\bibitem{blake2006refinements}
{\sc Blake, I.~F., Murty, V.~K., and Xu, G.}
\newblock Refinements of {Miller}'s algorithm for computing the {Weil/Tate}
  pairing.
\newblock {\em Journal of Algorithms 58}, 2 (2006), 134--149.

\bibitem{BBP19}
{\sc Blazy, O., Brouilhet, L., and Phan, D.~H.}
\newblock Anonymous identity based encryption with traceable identities.
\newblock In {\em Proceedings of the 14th International Conference on
  Availability, Reliability and Security\/} (2019), pp.~1--10.

\bibitem{BB04}
{\sc Boneh, D., and Boyen, X.}
\newblock Efficient selective-{ID} secure identity-based encryption without
  random oracles.
\newblock In {\em International conference on the theory and applications of
  cryptographic techniques\/} (2004), Springer, pp.~223--238.

\bibitem{BB04b}
{\sc Boneh, D., and Boyen, X.}
\newblock Secure identity based encryption without random oracles.
\newblock In {\em Annual International Cryptology Conference\/} (2004),
  Springer, pp.~443--459.

\bibitem{BBG05}
{\sc Boneh, D., Boyen, X., and Goh, E.-J.}
\newblock Hierarchical identity based encryption with constant size ciphertext.
\newblock In {\em Annual International Conference on the Theory and
  Applications of Cryptographic Techniques\/} (2005), Springer, pp.~440--456.

\bibitem{BBS04}
{\sc Boneh, D., Boyen, X., and Shacham, H.}
\newblock Short group signatures.
\newblock In {\em Annual international cryptology conference\/} (2004),
  Springer, pp.~41--55.

\bibitem{BCOP04}
{\sc Boneh, D., Di~Crescenzo, G., Ostrovsky, R., and Persiano, G.}
\newblock Public key encryption with keyword search.
\newblock In {\em International conference on the theory and applications of
  cryptographic techniques\/} (2004), Springer, pp.~506--522.

\bibitem{BF01}
{\sc Boneh, D., and Franklin, M.}
\newblock Identity-based encryption from the {Weil} pairing.
\newblock In {\em Annual international cryptology conference\/} (2001),
  Springer, pp.~213--229.

\bibitem{BGG+14}
{\sc Boneh, D., Gentry, C., Gorbunov, S., Halevi, S., Nikolaenko, V., Segev,
  G., Vaikuntanathan, V., and Vinayagamurthy, D.}
\newblock Fully key-homomorphic encryption, arithmetic circuit {ABE} and
  compact garbled circuits.
\newblock In {\em Annual International Conference on the Theory and
  Applications of Cryptographic Techniques\/} (2014), Springer, pp.~533--556.

\bibitem{BGH07}
{\sc Boneh, D., Gentry, C., and Hamburg, M.}
\newblock Space-efficient identity based encryptionwithout pairings.
\newblock In {\em 48th Annual IEEE Symposium on Foundations of Computer Science
  (FOCS'07)\/} (2007), IEEE, pp.~647--657.

\bibitem{BLR+15}
{\sc Boneh, D., Lewi, K., Raykova, M., Sahai, A., Zhandry, M., and Zimmerman,
  J.}
\newblock Semantically secure order-revealing encryption: {Multi}-input
  functional encryption without obfuscation.
\newblock In {\em Annual International Conference on the Theory and
  Applications of Cryptographic Techniques\/} (2015), Springer, pp.~563--594.

\bibitem{BRS13a}
{\sc Boneh, D., Raghunathan, A., and Segev, G.}
\newblock Function-private identity-based encryption: {Hiding} the function in
  functional encryption.
\newblock In {\em Annual Cryptology Conference\/} (2013), Springer,
  pp.~461--478.

\bibitem{BRS13b}
{\sc Boneh, D., Raghunathan, A., and Segev, G.}
\newblock Function-private subspace-membership encryption and its applications.
\newblock In {\em International Conference on the Theory and Application of
  Cryptology and Information Security\/} (2013), Springer, pp.~255--275.

\bibitem{boneh2006fully}
{\sc Boneh, D., Sahai, A., and Waters, B.}
\newblock Fully collusion resistant traitor tracing with short ciphertexts and
  private keys.
\newblock In {\em Annual International Conference on the Theory and
  Applications of Cryptographic Techniques\/} (2006), Springer, pp.~573--592.

\bibitem{BSW11}
{\sc Boneh, D., Sahai, A., and Waters, B.}
\newblock Functional encryption: {Definitions} and challenges.
\newblock In {\em Theory of Cryptography Conference\/} (2011), Springer,
  pp.~253--273.

\bibitem{BSW12}
{\sc Boneh, D., Sahai, A., and Waters, B.}
\newblock Functional encryption: {A} new vision for public-key cryptography.
\newblock {\em Communications of the ACM 55}, 11 (2012), 56--64.

\bibitem{BS03}
{\sc Boneh, D., and Silverberg, A.}
\newblock Applications of multilinear forms to cryptography.
\newblock {\em Contemporary Mathematics 324}, 1 (2003), 71--90.

\bibitem{BW07}
{\sc Boneh, D., and Waters, B.}
\newblock Conjunctive, subset, and range queries on encrypted data.
\newblock In {\em Theory of cryptography conference\/} (2007), Springer,
  pp.~535--554.

\bibitem{Boy03}
{\sc Boyen, X.}
\newblock Multipurpose identity-based signcryption.
\newblock In {\em Annual International Cryptology Conference\/} (2003),
  Springer, pp.~383--399.

\bibitem{Boy08}
{\sc Boyen, X.}
\newblock A tapestry of identity-based encryption: {Practical} frameworks
  compared.
\newblock {\em International Journal of Applied Cryptography 1}, 1 (2008),
  3--21.

\bibitem{BW06}
{\sc Boyen, X., and Waters, B.}
\newblock Anonymous hierarchical identity-based encryption (without random
  oracles).
\newblock In {\em Annual International Cryptology Conference\/} (2006),
  Springer, pp.~290--307.

\bibitem{brakerski2017hierarchical}
{\sc Brakerski, Z., Chandran, N., Goyal, V., Jain, A., Sahai, A., and Segev,
  G.}
\newblock Hierarchical functional encryption.
\newblock In {\em 8th Innovations in Theoretical Computer Science Conference
  (ITCS 2017)\/} (2017), Schloss Dagstuhl-Leibniz-Zentrum fuer Informatik.

\bibitem{brakerski2018function}
{\sc Brakerski, Z., and Segev, G.}
\newblock Function-private functional encryption in the private-key setting.
\newblock {\em Journal of Cryptology 31}, 1 (2018), 202--225.

\bibitem{brakerski2016circuit}
{\sc Brakerski, Z., and Vaikuntanathan, V.}
\newblock Circuit-{ABE} from {LWE}: {Unbounded} attributes and semi-adaptive
  security.
\newblock In {\em Annual International Cryptology Conference\/} (2016),
  Springer, pp.~363--384.

\bibitem{CK+09}
{\sc Camenisch, J., Kohlweiss, M., Rial, A., and Sheedy, C.}
\newblock Blind and anonymous identity-based encryption and authorised private
  searches on public key encrypted data.
\newblock In {\em International workshop on public key cryptography\/} (2009),
  Springer, pp.~196--214.

\bibitem{CHK03}
{\sc Canetti, R., Halevi, S., and Katz, J.}
\newblock A forward-secure public-key encryption scheme.
\newblock In {\em International Conference on the Theory and Applications of
  Cryptographic Techniques\/} (2003), Springer, pp.~255--271.

\bibitem{CHKP10}
{\sc Cash, D., Hofheinz, D., Kiltz, E., and Peikert, C.}
\newblock Bonsai trees, or how to delegate a lattice basis.
\newblock In {\em Annual international conference on the theory and
  applications of cryptographic techniques\/} (2010), Springer, pp.~523--552.

\bibitem{castagnos2018practical}
{\sc Castagnos, G., Laguillaumie, F., and Tucker, I.}
\newblock Practical fully secure unrestricted inner product functional
  encryption modulo $p$.
\newblock In {\em International Conference on the Theory and Application of
  Cryptology and Information Security\/} (2018), Springer, pp.~733--764.

\bibitem{Cha07}
{\sc Chase, M.}
\newblock Multi-authority attribute based encryption.
\newblock In {\em Theory of cryptography conference\/} (2007), Springer,
  pp.~515--534.

\bibitem{CC09}
{\sc Chase, M., and Chow, S.~S.}
\newblock Improving privacy and security in multi-authority attribute-based
  encryption.
\newblock In {\em Proceedings of the 16th ACM conference on Computer and
  communications security\/} (2009), pp.~121--130.

\bibitem{CS05}
{\sc Chatterjee, S., and Sarkar, P.}
\newblock Trading time for space: {Towards} an efficient {IBE} scheme with
  short(er) public parameters in the standard model.
\newblock In {\em International Conference on Information Security and
  Cryptology\/} (2005), Springer, pp.~424--440.

\bibitem{chen2018unbounded}
{\sc Chen, J., Gong, J., Kowalczyk, L., and Wee, H.}
\newblock Unbounded {ABE} via bilinear entropy expansion, revisited.
\newblock In {\em Annual International Conference on the Theory and
  Applications of Cryptographic Techniques\/} (2018), Springer, pp.~503--534.

\bibitem{chen2019identity}
{\sc Chen, J., Ling, J., Ning, J., and Ding, J.}
\newblock Identity-based signature schemes for multivariate public key
  cryptosystems.
\newblock {\em The Computer Journal 62}, 8 (2019), 1132--1147.

\bibitem{CW13}
{\sc Chen, J., and Wee, H.}
\newblock Fully,(almost) tightly secure {IBE} and dual system groups.
\newblock In {\em Annual Cryptology Conference\/} (2013), Springer,
  pp.~435--460.

\bibitem{cheon2015cryptanalysis}
{\sc Cheon, J.~H., Han, K., Lee, C., Ryu, H., and Stehl{\'e}, D.}
\newblock Cryptanalysis of the multilinear map over the integers.
\newblock In {\em Annual International Conference on the Theory and
  Applications of Cryptographic Techniques\/} (2015), Springer, pp.~3--12.

\bibitem{chotard2020dynamic}
{\sc Chotard, J., Dufour-Sans, E., Gay, R., Phan, D.~H., and Pointcheval, D.}
\newblock Dynamic decentralized functional encryption.
\newblock In {\em Annual International Cryptology Conference\/} (2020),
  Springer, pp.~747--775.

\bibitem{CDG+18}
{\sc Chotard, J., Sans, E.~D., Gay, R., Phan, D.~H., and Pointcheval, D.}
\newblock Decentralized multi-client functional encryption for inner product.
\newblock In {\em International Conference on the Theory and Application of
  Cryptology and Information Security\/} (2018), Springer, pp.~703--732.

\bibitem{CSW20}
{\sc Ciampi, M., Siniscalchi, L., and Waldner, H.}
\newblock Multi-client functional encryption for separable functions.
\newblock In {\em Public-Key Cryptography - {PKC} 2021 - 24th {IACR}
  International Conference on Practice and Theory of Public Key Cryptography,
  Virtual Event, May 10-13, 2021, Proceedings, Part {I}\/} (2021), vol.~12710
  of {\em Lecture Notes in Computer Science}, Springer, pp.~724--753.

\bibitem{Coc01}
{\sc Cocks, C.}
\newblock An identity based encryption scheme based on quadratic residues.
\newblock In {\em IMA international conference on cryptography and coding\/}
  (2001), Springer, pp.~360--363.

\bibitem{coron2013practical}
{\sc Coron, J.-S., Lepoint, T., and Tibouchi, M.}
\newblock Practical multilinear maps over the integers.
\newblock In {\em Annual Cryptology Conference\/} (2013), Springer,
  pp.~476--493.

\bibitem{CH+20}
{\sc Cui, H., Hon~Yuen, T., Deng, R.~H., and Wang, G.}
\newblock Server-aided revocable attribute-based encryption for cloud computing
  services.
\newblock {\em Concurrency and Computation: Practice and Experience 32}, 14
  (2020), e5680.

\bibitem{DRS17}
{\sc Daniel, R.~M., Rajsingh, E.~B., and Silas, S.}
\newblock Analysis of hierarchical identity based encryption schemes and its
  applicability to computing environments.
\newblock {\em Journal of information security and applications 36\/} (2017),
  20--31.

\bibitem{datta2016functional}
{\sc Datta, P., Dutta, R., and Mukhopadhyay, S.}
\newblock Functional encryption for inner product with full function privacy.
\newblock In {\em Public-Key Cryptography--PKC 2016}. Springer, 2016,
  pp.~164--195.

\bibitem{datta2017strongly}
{\sc Datta, P., Dutta, R., and Mukhopadhyay, S.}
\newblock Strongly full-hiding inner product encryption.
\newblock {\em Theoretical Computer Science 667\/} (2017), 16--50.

\bibitem{datta2018full}
{\sc Datta, P., Okamoto, T., and Tomida, J.}
\newblock Full-hiding (unbounded) multi-input inner product functional
  encryption from the $k$-linear assumption.
\newblock In {\em IACR International Workshop on Public Key Cryptography\/}
  (2018), Springer, pp.~245--277.

\bibitem{DIP12}
{\sc De~Caro, A., Iovino, V., and Persiano, G.}
\newblock Fully secure hidden vector encryption.
\newblock In {\em International Conference on Pairing-Based Cryptography\/}
  (2012), Springer, pp.~102--121.

\bibitem{DeCifris}
{\sc {De Componendis Cifris}}.
\newblock Functional {E}ncryption, an overview - {C}arla {M}ascia, {I}rene
  {V}illa.
\newblock \url{https://www.youtube.com/watch?v=jz8v22jDlAs},
  \url{https://www.decifris.it/cifrisCloud}.
\newblock Accessed: 2021-08-12.

\bibitem{DW+14}
{\sc Deng, H., Wu, Q., Qin, B., Domingo-Ferrer, J., Zhang, L., Liu, J., and
  Shi, W.}
\newblock Ciphertext-policy hierarchical attribute-based encryption with short
  ciphertexts.
\newblock {\em Information Sciences 275\/} (2014), 370--384.

\bibitem{do2020traceable}
{\sc Do, X.~T., Phan, D.~H., and Pointcheval, D.}
\newblock Traceable inner product functional encryption.
\newblock In {\em Cryptographers’ Track at the RSA Conference\/} (2020),
  Springer, pp.~564--585.

\bibitem{dufour2019unbounded}
{\sc Dufour-Sans, E., and Pointcheval, D.}
\newblock Unbounded inner-product functional encryption with succinct keys.
\newblock In {\em International Conference on Applied Cryptography and Network
  Security\/} (2019), Springer, pp.~426--441.

\bibitem{escala2017algebraic}
{\sc Escala, A., Herold, G., Kiltz, E., Rafols, C., and Villar, J.}
\newblock An algebraic framework for {Diffie}--{Hellman} assumptions.
\newblock {\em Journal of cryptology 30}, 1 (2017), 242--288.

\bibitem{FT15}
{\sc Fan, C.-I., and Tseng, Y.-F.}
\newblock Anonymous multi-receiver identity-based authenticated encryption with
  {CCA} security.
\newblock {\em Symmetry 7}, 4 (2015), 1856--1881.

\bibitem{FLR+10}
{\sc Fischlin, M., Lehmann, A., Ristenpart, T., Shrimpton, T., Stam, M., and
  Tessaro, S.}
\newblock Random oracles with (out) programmability.
\newblock In {\em International Conference on the Theory and Application of
  Cryptology and Information Security\/} (2010), Springer, pp.~303--320.

\bibitem{garey1979computers}
{\sc Garey, M.~R., and Johnson, D.~S.}
\newblock {\em Computers and Intractability: {A} Guide to the Theory of
  {NP}-Completeness}.
\newblock W. H. Freeman \& Co., USA, 1979.

\bibitem{garg2013candidate}
{\sc Garg, S., Gentry, C., and Halevi, S.}
\newblock Candidate multilinear maps from ideal lattices.
\newblock In {\em Annual International Conference on the Theory and
  Applications of Cryptographic Techniques\/} (2013), Springer, pp.~1--17.

\bibitem{GGH+16}
{\sc Garg, S., Gentry, C., Halevi, S., Raykova, M., Sahai, A., and Waters, B.}
\newblock Candidate indistinguishability obfuscation and functional encryption
  for all circuits.
\newblock {\em SIAM Journal on Computing 45}, 3 (2016), 882--929.

\bibitem{GGHZ16}
{\sc Garg, S., Gentry, C., Halevi, S., and Zhandry, M.}
\newblock Functional encryption without obfuscation.
\newblock In {\em Theory of Cryptography Conference\/} (2016), Springer,
  pp.~480--511.

\bibitem{Gay19}
{\sc Gay, R.}
\newblock {\em Public-Key Encryption, Revisited: {Tight} Security and Richer
  Functionalities}.
\newblock PhD thesis, PSL Research University, 2019.

\bibitem{gay2020new}
{\sc Gay, R.}
\newblock A new paradigm for public-key functional encryption for degree-2
  polynomials.
\newblock In {\em IACR International Conference on Public-Key Cryptography\/}
  (2020), Springer, pp.~95--120.

\bibitem{Gen06}
{\sc Gentry, C.}
\newblock Practical identity-based encryption without random oracles.
\newblock In {\em Annual international conference on the theory and
  applications of cryptographic techniques\/} (2006), Springer, pp.~445--464.

\bibitem{Gen09}
{\sc Gentry, C.}
\newblock Fully homomorphic encryption using ideal lattices.
\newblock In {\em Proceedings of the forty-first annual ACM symposium on Theory
  of computing\/} (2009), pp.~169--178.

\bibitem{gentry2015graph}
{\sc Gentry, C., Gorbunov, S., and Halevi, S.}
\newblock Graph-induced multilinear maps from lattices.
\newblock In {\em Theory of Cryptography Conference\/} (2015), Springer,
  pp.~498--527.

\bibitem{GS02}
{\sc Gentry, C., and Silverberg, A.}
\newblock Hierarchical {ID}-based cryptography.
\newblock In {\em International conference on the theory and application of
  cryptology and information security\/} (2002), Springer, pp.~548--566.

\bibitem{GAS16}
{\sc Giacon, F., Aragona, R., and Sala, M.}
\newblock A proof of security for a key-policy {RS}-{ABE} scheme.
\newblock {\em JP Journal of Algebra, Number Theory and Applications 40}, 1
  (2018), 29--90.

\bibitem{GGG+14}
{\sc Goldwasser, S., Gordon, S.~D., Goyal, V., Jain, A., Katz, J., Liu, F.-H.,
  Sahai, A., Shi, E., and Zhou, H.-S.}
\newblock Multi-input functional encryption.
\newblock In {\em Annual International Conference on the Theory and
  Applications of Cryptographic Techniques\/} (2014), Springer, pp.~578--602.

\bibitem{GKP+13}
{\sc Goldwasser, S., Kalai, Y., Popa, R.~A., Vaikuntanathan, V., and Zeldovich,
  N.}
\newblock Reusable garbled circuits and succinct functional encryption.
\newblock In {\em Proceedings of the forty-fifth annual ACM symposium on Theory
  of computing\/} (2013), pp.~555--564.

\bibitem{GVW12}
{\sc Gorbunov, S., Vaikuntanathan, V., and Wee, H.}
\newblock Functional encryption with bounded collusions via multi-party
  computation.
\newblock In {\em Annual Cryptology Conference\/} (2012), Springer,
  pp.~162--179.

\bibitem{GVW13}
{\sc Gorbunov, S., Vaikuntanathan, V., and Wee, H.}
\newblock Attribute-based encryption for circuits.
\newblock {\em Journal of the ACM (JACM) 62}, 6 (2015), 1--33.

\bibitem{GVW15}
{\sc Gorbunov, S., Vaikuntanathan, V., and Wee, H.}
\newblock Predicate encryption for circuits from {LWE}.
\newblock In {\em Annual Cryptology Conference\/} (2015), Springer,
  pp.~503--523.

\bibitem{goyal2007reducing}
{\sc Goyal, V.}
\newblock Reducing trust in the {PKG} in identity based cryptosystems.
\newblock In {\em Annual International Cryptology Conference\/} (2007),
  Springer, pp.~430--447.

\bibitem{GPSW06}
{\sc Goyal, V., Pandey, O., Sahai, A., and Waters, B.}
\newblock Attribute-based encryption for fine-grained access control of
  encrypted data.
\newblock In {\em Proceedings of the 13th ACM conference on Computer and
  communications security\/} (2006), pp.~89--98.

\bibitem{HK+20}
{\sc Hanaoka, G., Komatsu, M., Ohara, K., Sakai, Y., and Yamada, S.}
\newblock Semantic definition of anonymity in identity-based encryption and its
  relation to indistinguishability-based definition.
\newblock In {\em European Symposium on Research in Computer Security\/}
  (2020), Springer, pp.~65--85.

\bibitem{HY18}
{\sc Hanaoka, G., and Yamada, S.}
\newblock A survey on identity-based encryption from lattices.
\newblock In {\em Mathematical modelling for next-generation cryptography}.
  Springer, 2018, pp.~349--365.

\bibitem{HW+16}
{\sc He, K., Weng, J., Liu, J.-N., Liu, J.~K., Liu, W., and Deng, R.~H.}
\newblock Anonymous identity-based broadcast encryption with chosen-ciphertext
  security.
\newblock In {\em Proceedings of the 11th ACM on Asia Conference on Computer
  and Communications Security\/} (2016), pp.~247--255.

\bibitem{HL02}
{\sc Horwitz, J., and Lynn, B.}
\newblock Toward hierarchical identity-based encryption.
\newblock In {\em International conference on the theory and applications of
  cryptographic techniques\/} (2002), Springer, pp.~466--481.

\bibitem{HN10}
{\sc Hur, J., and Noh, D.~K.}
\newblock Attribute-based access control with efficient revocation in data
  outsourcing systems.
\newblock {\em IEEE Transactions on Parallel and Distributed Systems 22}, 7
  (2010), 1214--1221.

\bibitem{IP08}
{\sc Iovino, V., and Persiano, G.}
\newblock Hidden-vector encryption with groups of prime order.
\newblock In {\em International Conference on Pairing-Based Cryptography\/}
  (2008), Springer, pp.~75--88.

\bibitem{KSW08}
{\sc Katz, J., Sahai, A., and Waters, B.}
\newblock Predicate encryption supporting disjunctions, polynomial equations,
  and inner products.
\newblock In {\em annual international conference on the theory and
  applications of cryptographic techniques\/} (2008), Springer, pp.~146--162.

\bibitem{kim2019new}
{\sc Kim, S., Kim, J., and Seo, J.~H.}
\newblock A new approach to practical function-private inner product
  encryption.
\newblock {\em Theoretical Computer Science 783\/} (2019), 22--40.

\bibitem{kim2018function}
{\sc Kim, S., Lewi, K., Mandal, A., Montgomery, H., Roy, A., and Wu, D.~J.}
\newblock Function-hiding inner product encryption is practical.
\newblock In {\em International Conference on Security and Cryptography for
  Networks\/} (2018), Springer, pp.~544--562.

\bibitem{LCH13}
{\sc Lee, C.-C., Chung, P.-S., and Hwang, M.-S.}
\newblock A survey on attribute-based encryption schemes of access control in
  cloud environments.
\newblock {\em IJ Network Security 15}, 4 (2013), 231--240.

\bibitem{LC+13}
{\sc Lee, K., Choi, S.~G., Lee, D.~H., Park, J.~H., and Yung, M.}
\newblock Self-updatable encryption: {Time} constrained access control with
  hidden attributes and better efficiency.
\newblock In {\em International Conference on the Theory and Application of
  Cryptology and Information Security\/} (2013), Springer, pp.~235--254.

\bibitem{LOS+10}
{\sc Lewko, A., Okamoto, T., Sahai, A., Takashima, K., and Waters, B.}
\newblock Fully secure functional encryption: {Attribute}-based encryption and
  (hierarchical) inner product encryption.
\newblock In {\em Annual International Conference on the Theory and
  Applications of Cryptographic Techniques\/} (2010), Springer, pp.~62--91.

\bibitem{lewko2010new}
{\sc Lewko, A., and Waters, B.}
\newblock New techniques for dual system encryption and fully secure {HIBE}
  with short ciphertexts.
\newblock In {\em Theory of Cryptography Conference\/} (2010), Springer,
  pp.~455--479.

\bibitem{lewko2011decentralizing}
{\sc Lewko, A., and Waters, B.}
\newblock Decentralizing attribute-based encryption.
\newblock In {\em Annual international conference on the theory and
  applications of cryptographic techniques\/} (2011), Springer, pp.~568--588.

\bibitem{lewko2011unbounded}
{\sc Lewko, A., and Waters, B.}
\newblock Unbounded {HIBE} and attribute-based encryption.
\newblock In {\em Annual International Conference on the Theory and
  Applications of Cryptographic Techniques\/} (2011), Springer, pp.~547--567.

\bibitem{li2020decentralized}
{\sc Li, J., Hu, S., Zhang, Y., and Han, J.}
\newblock A decentralized multi-authority ciphertext-policy attribute-based
  encryption with mediated obfuscation.
\newblock {\em Soft Computing 24}, 3 (2020), 1869--1882.

\bibitem{LYZ19}
{\sc Li, J., Yu, Q., and Zhang, Y.}
\newblock Hierarchical attribute based encryption with continuous
  leakage-resilience.
\newblock {\em Information Sciences 484\/} (2019), 113--134.

\bibitem{LT19}
{\sc Libert, B., and {\c{T}}i{\c{t}}iu, R.}
\newblock Multi-client functional encryption for linear functions in the
  standard model from {LWE}.
\newblock In {\em International Conference on the Theory and Application of
  Cryptology and Information Security\/} (2019), Springer, pp.~520--551.

\bibitem{lin2017indistinguishability}
{\sc Lin, H.}
\newblock Indistinguishability obfuscation from {SXDH} on 5-linear maps and
  locality-5 {PRGs}.
\newblock In {\em Annual International Cryptology Conference\/} (2017),
  Springer, pp.~599--629.

\bibitem{lin2010secure}
{\sc Lin, H., Cao, Z., Liang, X., and Shao, J.}
\newblock Secure threshold multi authority attribute based encryption without a
  central authority.
\newblock {\em Information Sciences 180}, 13 (2010), 2618--2632.

\bibitem{LT17}
{\sc Lin, H., and Tessaro, S.}
\newblock Indistinguishability obfuscation from trilinear maps and block-wise
  local {PRGs}.
\newblock In {\em Annual International Cryptology Conference\/} (2017),
  Springer, pp.~630--660.

\bibitem{liu2021efficient}
{\sc Liu, W., Huang, Q., Chen, X., and Li, H.}
\newblock Efficient functional encryption for inner product with
  simulation-based security.
\newblock {\em Cybersecurity 4}, 1 (2021), 1--13.

\bibitem{liu2013blackbox}
{\sc Liu, Z., Cao, Z., and Wong, D.~S.}
\newblock Blackbox traceable {CP-ABE}: {How} to catch people leaking their keys
  by selling decryption devices on ebay.
\newblock In {\em Proceedings of the 2013 ACM SIGSAC conference on Computer \&
  communications security\/} (2013), pp.~475--486.

\bibitem{liu2015practical}
{\sc Liu, Z., and Wong, D.~S.}
\newblock Practical ciphertext-policy attribute-based encryption: {Traitor}
  tracing, revocation, and large universe.
\newblock In {\em International Conference on Applied Cryptography and Network
  Security\/} (2015), Springer, pp.~127--146.

\bibitem{liu2015traceable}
{\sc Liu, Z., and Wong, D.~S.}
\newblock Traceable {CP-ABE} on prime order groups: {Fully} secure and fully
  collusion-resistant blackbox traceable.
\newblock In {\em International Conference on Information and Communications
  Security\/} (2015), Springer, pp.~109--124.

\bibitem{LMS15}
{\sc Longo, R., Marcolla, C., and Sala, M.}
\newblock Key-policy multi-authority attribute-based encryption.
\newblock In {\em International Conference on Algebraic Informatics\/} (2015),
  Springer, pp.~152--164.

\bibitem{LMS16}
{\sc Longo, R., Marcolla, C., and Sala, M.}
\newblock Collaborative multi-authority {KP}-{ABE} for shorter keys and
  parameters.
\newblock In {\em International Conference on Algebraic Informatics\/} (2017),
  p.~67.
\newblock \url{https://eprint.iacr.org/2016/262}.

\bibitem{ma2018mmap}
{\sc Ma, F., and Zhandry, M.}
\newblock The {MMap} strikes back: {Obfuscation} and new multilinear maps
  immune to {CLT13} zeroizing attacks.
\newblock In {\em Theory of Cryptography Conference\/} (2018), Springer,
  pp.~513--543.

\bibitem{MWL18}
{\sc Ma, X., Wang, X., and Lin, D.}
\newblock Anonymous identity-based encryption with identity recovery.
\newblock In {\em Australasian Conference on Information Security and
  Privacy\/} (2018), Springer, pp.~360--375.

\bibitem{MOV93}
{\sc Menezes, A.~J., Okamoto, T., and Vanstone, S.~A.}
\newblock Reducing elliptic curve logarithms to logarithms in a finite field.
\newblock {\em iEEE Transactions on information Theory 39}, 5 (1993),
  1639--1646.

\bibitem{miller2004weil}
{\sc Miller, V.~S.}
\newblock The {Weil} pairing, and its efficient calculation.
\newblock {\em Journal of cryptology 17}, 4 (2004), 235--261.

\bibitem{MNT01}
{\sc Miyaji, A., Nakabayashi, M., and Takano, S.}
\newblock New explicit conditions of elliptic curve traces for {FR}-reduction.
\newblock {\em IEICE transactions on fundamentals of electronics,
  communications and computer sciences 84}, 5 (2001), 1234--1243.

\bibitem{muller2008distributed}
{\sc M{\"u}ller, S., Katzenbeisser, S., and Eckert, C.}
\newblock Distributed attribute-based encryption.
\newblock In {\em International Conference on Information Security and
  Cryptology\/} (2008), Springer, pp.~20--36.

\bibitem{Nac07}
{\sc Naccache, D.}
\newblock Secure and practical identity-based encryption.
\newblock {\em IET Information Security 1}, 2 (2007), 59--64.

\bibitem{naveed2014controlled}
{\sc Naveed, M., Agrawal, S., Prabhakaran, M., Wang, X., Ayday, E., Hubaux,
  J.-P., and Gunter, C.}
\newblock Controlled functional encryption.
\newblock In {\em Proceedings of the 2014 ACM SIGSAC Conference on Computer and
  Communications Security\/} (2014), pp.~1280--1291.

\bibitem{ning2014large}
{\sc Ning, J., Cao, Z., Dong, X., Wei, L., and Lin, X.}
\newblock Large universe ciphertext-policy attribute-based encryption with
  white-box traceability.
\newblock In {\em European Symposium on Research in Computer Security\/}
  (2014), Springer, pp.~55--72.

\bibitem{okamoto2008homomorphic}
{\sc Okamoto, T., and Takashima, K.}
\newblock Homomorphic encryption and signatures from vector decomposition.
\newblock In {\em International Conference on Pairing-Based Cryptography\/}
  (2008), Springer, pp.~57--74.

\bibitem{OT09}
{\sc Okamoto, T., and Takashima, K.}
\newblock Hierarchical predicate encryption for inner-products.
\newblock In {\em International Conference on the Theory and Application of
  Cryptology and Information Security\/} (2009), Springer, pp.~214--231.

\bibitem{OT10}
{\sc Okamoto, T., and Takashima, K.}
\newblock Fully secure functional encryption with general relations from the
  decisional linear assumption.
\newblock In {\em Annual cryptology conference\/} (2010), Springer,
  pp.~191--208.

\bibitem{OT12}
{\sc Okamoto, T., and Takashima, K.}
\newblock Adaptively attribute-hiding (hierarchical) inner product encryption.
\newblock In {\em Annual International Conference on the Theory and
  Applications of Cryptographic Techniques\/} (2012), Springer, pp.~591--608.

\bibitem{okamoto2012fully}
{\sc Okamoto, T., and Takashima, K.}
\newblock Fully secure unbounded inner-product and attribute-based encryption.
\newblock In {\em International Conference on the Theory and Application of
  Cryptology and Information Security\/} (2012), Springer, pp.~349--366.

\bibitem{okamoto2013efficient}
{\sc Okamoto, T., and Takashima, K.}
\newblock Efficient (hierarchical) inner-product encryption tightly reduced
  from the decisional linear assumption.
\newblock {\em IEICE Transactions on Fundamentals of Electronics,
  Communications and Computer Sciences 96}, 1 (2013), 42--52.

\bibitem{ON10}
{\sc O'Neill, A.}
\newblock Definitional issues in functional encryption.
\newblock Cryptology ePrint Archive, Report 2009/556, 2010.
\newblock \url{https://eprint.iacr.org/2010/556}.

\bibitem{OSW07}
{\sc Ostrovsky, R., Sahai, A., and Waters, B.}
\newblock Attribute-based encryption with non-monotonic access structures.
\newblock In {\em Proceedings of the 14th ACM conference on Computer and
  communications security\/} (2007), pp.~195--203.

\bibitem{page2006comparison}
{\sc Page, D., Smart, N.~P., and Vercauteren, F.}
\newblock A comparison of {MNT} curves and supersingular curves.
\newblock {\em Applicable Algebra in Engineering, Communication and Computing
  17}, 5 (2006), 379--392.

\bibitem{Par10}
{\sc Park, J.~H.}
\newblock Efficient hidden vector encryption for conjunctive queries on
  encrypted data.
\newblock {\em IEEE Transactions on Knowledge and Data Engineering 23}, 10
  (2010), 1483--1497.

\bibitem{Par11}
{\sc Park, J.~H.}
\newblock Inner-product encryption under standard assumptions.
\newblock {\em Designs, Codes and Cryptography 58}, 3 (2011), 235--257.

\bibitem{PL+13}
{\sc Park, J.~H., Lee, K., Susilo, W., and Lee, D.~H.}
\newblock Fully secure hidden vector encryption under standard assumptions.
\newblock {\em Information Sciences 232\/} (2013), 188--207.

\bibitem{patarin1997trapdoor}
{\sc Patarin, J., and Goubin, L.}
\newblock Trapdoor one-way permutations and multivariate polynomials.
\newblock In {\em International Conference on Information and Communications
  Security\/} (1997), Springer, pp.~356--368.

\bibitem{Pei09}
{\sc Peikert, C.}
\newblock Bonsai trees (or, arboriculture in lattice-based cryptography).
\newblock Cryptology ePrint Archive, Report 2009/359, 2009.
\newblock \url{https://eprint.iacr.org/2009/359}.

\bibitem{phan2011identity}
{\sc Phan, D.~H., and Trinh, V.~C.}
\newblock Identity-based trace and revoke schemes.
\newblock In {\em International Conference on Provable Security\/} (2011),
  Springer, pp.~204--221.

\bibitem{QLDJ14}
{\sc Qiao, Z., Liang, S., Davis, S., and Jiang, H.}
\newblock Survey of attribute based encryption.
\newblock In {\em 15th IEEE/ACIS International Conference on Software
  Engineering, Artificial Intelligence, Networking and Parallel/Distributed
  Computing (SNPD)\/} (2014), IEEE, pp.~1--6.

\bibitem{rahulamathavan2015user}
{\sc Rahulamathavan, Y., Veluru, S., Han, J., Li, F., Rajarajan, M., and Lu,
  R.}
\newblock User collusion avoidance scheme for privacy-preserving decentralized
  key-policy attribute-based encryption.
\newblock {\em IEEE Transactions on Computers 65}, 9 (2015), 2939--2946.

\bibitem{Reg09}
{\sc Regev, O.}
\newblock On lattices, learning with errors, random linear codes, and
  cryptography.
\newblock {\em Journal of the ACM (JACM) 56}, 6 (2009), 1--40.

\bibitem{debnath2020post}
{\sc S.~K.~Debnath, S.~Mesnager, K.~D., and Kundu, N.}
\newblock Post-quantum secure inner product functional encryption using
  multivariate public key cryptography.
\newblock {\em Journal Mediterranean Journal of Mathematics\/} (To appear).

\bibitem{SS10}
{\sc Sahai, A., and Seyalioglu, H.}
\newblock Worry-free encryption: {Functional} encryption with public keys.
\newblock In {\em Proceedings of the 17th ACM conference on Computer and
  communications security\/} (2010), pp.~463--472.

\bibitem{SW05}
{\sc Sahai, A., and Waters, B.}
\newblock Fuzzy identity-based encryption.
\newblock In {\em Annual international conference on the theory and
  applications of cryptographic techniques\/} (2005), Springer, pp.~457--473.

\bibitem{SV+10}
{\sc Sedghi, S., Van~Liesdonk, P., Nikova, S., Hartel, P., and Jonker, W.}
\newblock Searching keywords with wildcards on encrypted data.
\newblock In {\em International Conference on Security and Cryptography for
  Networks\/} (2010), Springer, pp.~138--153.

\bibitem{shamir1979share}
{\sc Shamir, A.}
\newblock How to share a secret.
\newblock {\em Communications of the ACM 22}, 11 (1979), 612--613.

\bibitem{Sha84}
{\sc Shamir, A.}
\newblock Identity-based cryptosystems and signature schemes.
\newblock In {\em Workshop on the theory and application of cryptographic
  techniques\/} (1984), Springer, pp.~47--53.

\bibitem{shen2009predicate}
{\sc Shen, E., Shi, E., and Waters, B.}
\newblock Predicate privacy in encryption systems.
\newblock In {\em Theory of Cryptography Conference\/} (2009), Springer,
  pp.~457--473.

\bibitem{SW08}
{\sc Shi, E., and Waters, B.}
\newblock Delegating capabilities in predicate encryption systems.
\newblock In {\em International Colloquium on Automata, Languages, and
  Programming\/} (2008), Springer, pp.~560--578.

\bibitem{shor1999polynomial}
{\sc Shor, P.~W.}
\newblock Polynomial-time algorithms for prime factorization and discrete
  logarithms on a quantum computer.
\newblock {\em SIAM review 41}, 2 (1999), 303--332.

\bibitem{Sil09}
{\sc Silverman, J.~H.}
\newblock {\em The arithmetic of elliptic curves}, vol.~106.
\newblock Springer Science \& Business Media, 2009.

\bibitem{soroush2020verifiable}
{\sc Soroush, N., Iovino, V., Rial, A., Roenne, P.~B., and Ryan, P.~Y.}
\newblock Verifiable inner product encryption scheme.
\newblock In {\em IACR International Conference on Public-Key Cryptography\/}
  (2020), Springer, pp.~65--94.

\bibitem{takashima2008efficiently}
{\sc Takashima, K.}
\newblock Efficiently computable distortion maps for supersingular curves.
\newblock In {\em International Algorithmic Number Theory Symposium\/} (2008),
  Springer, pp.~88--101.

\bibitem{tang2004non}
{\sc Tang, C., Pei, D., Liu, Z., and He, Y.}
\newblock Non-interactive and information-theoretic secure publicly verifiable
  secret sharing.
\newblock Cryptology ePrint Archive, Report 2004/201, 2004.
\newblock \url{https://eprint.iacr.org/2004/201}.

\bibitem{tang2010verifiable}
{\sc Tang, Q., and Ji, D.}
\newblock Verifiable attribute-based encryption.
\newblock {\em IJ Network Security 10}, 2 (2010), 114--120.

\bibitem{TAA19}
{\sc Tea, B.~C., Ariffin, M. R.~K., and Asbullah, M.~A.}
\newblock Identity-based encryption schemes--{A} review.
\newblock {\em Journal of Multidisciplinary Engineering Science and Technology
  (JMEST) 6}, 12 (2019).

\bibitem{tomida2016efficient}
{\sc Tomida, J., Abe, M., and Okamoto, T.}
\newblock Efficient functional encryption for inner-product values with
  full-hiding security.
\newblock In {\em International Conference on Information Security\/} (2016),
  Springer, pp.~408--425.

\bibitem{tomida2020unbounded}
{\sc Tomida, J., and Takashima, K.}
\newblock Unbounded inner product functional encryption from bilinear maps.
\newblock {\em Japan Journal of Industrial and Applied Mathematics 37}, 3
  (2020), 723--779.

\bibitem{van2020multi}
{\sc van~de Kamp, T., Peter, A., and Jonker, W.}
\newblock A multi-authority approach to various predicate encryption types.
\newblock {\em Designs, codes and cryptography 88}, 2 (2020), 363--390.

\bibitem{WLW10}
{\sc Wang, G., Liu, Q., and Wu, J.}
\newblock Hierarchical attribute-based encryption for fine-grained access
  control in cloud storage services.
\newblock In {\em Proceedings of the 17th ACM conference on Computer and
  communications security\/} (2010), pp.~735--737.

\bibitem{Wat05}
{\sc Waters, B.}
\newblock Efficient identity-based encryption without random oracles.
\newblock In {\em Annual International Conference on the Theory and
  Applications of Cryptographic Techniques\/} (2005), Springer, pp.~114--127.

\bibitem{Wat09}
{\sc Waters, B.}
\newblock Dual system encryption: {Realizing} fully secure {IBE} and {HIBE}
  under simple assumptions.
\newblock In {\em Annual International Cryptology Conference\/} (2009),
  Springer, pp.~619--636.

\bibitem{Wat11}
{\sc Waters, B.}
\newblock Ciphertext-policy attribute-based encryption: {An} expressive,
  efficient, and provably secure realization.
\newblock In {\em International Workshop on Public Key Cryptography\/} (2011),
  Springer, pp.~53--70.

\bibitem{Wee17}
{\sc Wee, H.}
\newblock Attribute-hiding predicate encryption in bilinear groups, revisited.
\newblock In {\em Theory of Cryptography Conference\/} (2017), Springer,
  pp.~206--233.

\bibitem{Wee20}
{\sc Wee, H.}
\newblock Functional encryption for quadratic functions from $k$-{Lin},
  revisited.
\newblock In {\em Theory of Cryptography Conference\/} (2020), Springer,
  pp.~210--228.

\bibitem{XL+16}
{\sc Xu, P., Li, J., Wang, W., and Jin, H.}
\newblock Anonymous identity-based broadcast encryption with constant
  decryption complexity and strong security.
\newblock In {\em Proceedings of the 11th ACM on Asia Conference on Computer
  and Communications Security\/} (2016), pp.~223--233.

\bibitem{yang2018improving}
{\sc Yang, Y., Chen, X., Chen, H., and Du, X.}
\newblock Improving privacy and security in decentralizing multi-authority
  attribute-based encryption in cloud computing.
\newblock {\em IEEE Access 6\/} (2018), 18009--18021.

\bibitem{ZYT13}
{\sc Zhang, M., Yang, B., and Takagi, T.}
\newblock Bounded leakage-resilient functional encryption with hidden vector
  predicate.
\newblock {\em The Computer Journal 56}, 4 (2013), 464--477.

\bibitem{ZD+20}
{\sc Zhang, Y., Deng, R.~H., Xu, S., Sun, J., Li, Q., and Zheng, D.}
\newblock Attribute-based encryption for cloud computing access control: {A}
  survey.
\newblock {\em ACM Computing Surveys (CSUR) 53}, 4 (2020), 1--41.

\bibitem{zhao2018improved}
{\sc Zhao, Q., Zeng, Q., and Liu, X.}
\newblock Improved construction for inner product functional encryption.
\newblock {\em Security and Communication Networks 2018\/} (2018).

\bibitem{zhao2018simulation}
{\sc Zhao, Q., Zeng, Q., Liu, X., and Xu, H.}
\newblock Simulation-based security of function-hiding inner product
  encryption.
\newblock {\em Science China Information Sciences 61}, 4 (2018), 1--3.

\end{thebibliography}

\end{document}